\documentclass[envcountsame,runningheads,a4paper]{llncs}
\usepackage{pscproc2}
\pagestyle{plain}

\usepackage{latexsym}
\usepackage{amssymb}
\usepackage{amsmath}
\usepackage{ifthen}
\usepackage{stmaryrd}
\usepackage{booktabs}
\usepackage{enumerate}
\usepackage{tikz}
\usepackage{subfigure}





\newcommand{\oneset}[1]{\left\{\mathinner{#1}\right\}}
\newcommand{\smallset}[1]{\left\{\mathinner{#1}\right\}}

\newcommand{\abs}[1]{\left|\mathinner{#1}\right|}


\newcommand{\N}{\mathbb{N}}





\newcommand{\coloneq}{\mathrel{\mathop:}=}

\newcommand{\context}{\mathrm{context}}
\newcommand{\last}{\mathrm{last}}

\newcommand{\first}{\mathrm{first}}
\newcommand{\rightshift}{r}

\newcommand{\lbl}{\lambda}
\newcommand{\BWT}{\mathrm{BWT}}
\newcommand{\ST}{\mathrm{ST}}
\newcommand{\bijBWT}{\mathrm{BWTS}}
\newcommand{\bijST}{\mathrm{LST}}
\newcommand{\bijM}{\mathrm{LM}}

\newtheorem{fact}[theorem]{Fact}

\begin{document}

\title{On Bijective Variants of the \\ Burrows-Wheeler Transform}

\author{Manfred Kuf{}leitner
\institute{Universit{\"a}t Stuttgart, FMI, \\ Universit{\"a}tsstr.\ 38, 70569 Stuttgart, Germany \\
\email{kufleitner@fmi.uni-stuttgart.de}}}


\maketitle

\begin{abstract}
  The sort transform (ST) is a modification of the Burrows-Wheeler
  transform (BWT). Both transformations map an arbitrary word of
  length $n$ to a pair consisting of a word of length $n$ and an index
  between $1$ and $n$. The BWT sorts all rotation conjugates of the
  input word, whereas the ST of order $k$ only uses the first $k$
  letters for sorting all such conjugates. If two conjugates start with the
  same prefix of length $k$, then the indices of the rotations are
  used for tie-breaking.  Both transforms output the sequence of the
  last letters of the sorted list and the index of the input within
  the sorted list.  In this paper, we discuss a bijective variant of
  the BWT (due to Scott), proving its correctness and relations to
  other results due to Gessel and Reutenauer (1993) and Crochemore,
  D{\'e}sarm{\'e}nien, and Perrin (2005).  Further, we present a novel
  bijective variant of the ST. 
\end{abstract}

\section{Introduction}
\label{sec:intro}

The Burrows-Wheeler transform (BWT) is a widely used preprocessing
technique in lossless data compression \cite{bw94tr}. It brings every
word into a form which is likely to be easier to compress
\cite{man01acm}. Its compression performance is almost as good as PPM
(prediction by partial matching) schemes \cite{cw84toc} while its
speed is comparable to that of Lempel-Ziv algorithms
\cite{lz77,lz78}. Therefore, BWT based compression schemes are a very
reasonable trade-off between running time and compression ratio.

In the classic setting, the BWT maps a word of length $n$ to a word of
length $n$ and an index (comprising $O(\log n)$ bits). Thus,
the BWT is not bijective and hence, it is introducing new redundancies
to the data, which is cumbersome and undesired in applications of data
compression or cryptography. Instead of using an index, a very common
technique is to assume that the input has a unique end-of-string
symbol \cite{bk00tc,man01acm}.  Even though this often simplifies
proofs or allows speeding up the algorithms, the use of an
end-of-string symbol introduces new redundancies (again $O(\log n)$
bits are required for coding the end-of-string symbol).

We discuss bijective versions of the BWT which are one-to-one
correspondences between words of length $n$. In particular, no index
and no end-of-string symbol is needed. Not only does bijectivity save
a few bits, for example, it also increases data security when
cryptographic procedures are involved; it is more natural and it can
help us to understand the BWT even better.  Moreover, the bijective
variants give us new possibilities for enhancements; for example, in
the bijective BWT different orders on the letters can be used for the
two main stages.

Several variants of the BWT have been introduced
\cite{aa04ijcm,mrrs07tcs}. An overview can be found in the textbook by
Adjeroh, Bell, and Mukherjee \cite{abm08book}. One particularly
important variant for this paper is the sort transform (ST), which is
also known under the name Schindler transform
\cite{schindler97dcc}. In the original paper, the inverse of the ST is
described only very briefly. More precise descriptions and improved
algorithms for the inverse of the ST have been proposed recently
\cite{nz06dcc,nz07ieee,nzc08cpm}.  As for the BWT, the ST also
involves an index or an end-of-string symbol. In particular, the ST is
not onto and it introduces new redundancies.

The bijective BWT was discovered and first described by Scott (2007),
but his exposition of the algorithm was somewhat cryptic, and was not
appreciated as such. In particular, the fact that this transform is
based on the Lyndon factorization went unnoticed by Scott.  Gil and
Scott \cite{gs09submitted} provided an accessible description of the
algorithm.  Here, we give an alternative description, a proof of its
correctness, and more importantly, draw connections between Scott's
algorithm and other results in combinatorics on words. Further, this
variation of the BWT is used to introduce techniques which are
employed at the bijective sort transform, which makes the main
contribution of this paper.  The forward transform of the bijective ST
is rather easy, but we have to be very careful with some
details. Compared with the inverse of the bijective BWT, the inverse
of the bijective ST is more involved.

\bigskip

\noindent
\textbf{Outline.} \ 
The paper is organized as follows. In
Section~\ref{sec:prelim} we fix some notation and repeat  basic
facts about combinatorics on words.  On our way to the bijective sort
transform (Section~\ref{sec:lst}) we investigate the BWT
(Section~\ref{sec:bwt}), the bijective BWT (Section~\ref{sec:lbwt}),
and the sort transform (Section~\ref{sec:st}).  We give full
constructive proofs for the injectivity of the respective transforms.
Each section ends with a running example which illustrates the
respective concepts.  Apart from basic combinatorics on words, the
paper is completely self-contained.

\section{Preliminaries}
\label{sec:prelim}

Throughout this paper we fix the finite non-empty alphabet $\Sigma$
and assume that $\Sigma$ is equipped with a linear order $\leq$.  A
\emph{word} is a sequence $a_1 \cdots a_n$ of letters $a_i \in
\Sigma$, $1 \leq i \leq n$.  The set of all such sequences is denoted
by $\Sigma^*$; it is the free monoid over~$\Sigma$ with concatenation
as composition and with the empty word $\varepsilon$ as neutral
element. The set $\Sigma^+ = \Sigma^* \setminus
\smallset{\varepsilon}$ consists of all non-empty words.  For words
$u,v$ we write $u \leq v$ if $u=v$ or if $u$ is lexicographically
smaller than $v$ with respect to the order $\leq$ on the letters.  Let
$w = a_1 \cdots a_n \in \Sigma^+$ be a non-empty word with letters
$a_i \in \Sigma$. The \emph{length} of $w$, denoted by $\abs{w}$, is
$n$.  The empty word is the unique word of length $0$.  We can think
of $w$ as a labeled linear order: position $i$ of $w$ is labeled by
$a_i \in \Sigma$ and in this case we write $\lbl_w(i) = a_i$, so each
word $w$ induces a labeling function $\lbl_w$.
%
%
The first letter $a_1$ of $w$ is denoted by $\first(w)$ while the last
letter $a_n$ is denoted by $\last(w)$.  The \emph{reversal} of a word
$w$ is $\overline{w} = a_n \cdots a_1$.  We say that two words $u,v$
are \emph{conjugate} if $u = st$ and $v = ts$ for some words $s,t$,
i.e., $u$ and $v$ are cyclic shifts of one another. The $j$-fold
concatenation of $w$ with itself is denoted by $w^j$.  A word $u$ is a
\emph{root} of $w$ if $w = u^j$ for some $j \in \N$.  A word $w$ is
\emph{primitive} if $w = u^j$ implies $j = 1$ and hence $u = w$, i.e.,
$w$ has only the trivial root $w$.

The \emph{right-shift} of $w = a_1 \cdots a_n$ is $\rightshift(w) =
a_n a_1 \cdots a_{n-1}$ and the $i$-fold right
shift~$\rightshift^i(w)$ is defined inductively by $\rightshift^0(w) =
w$ and $\rightshift^{i+1}(w) = \rightshift(\rightshift^{i}(w))$. We
have $\rightshift^{i}(w) = a_{n-i+1} \cdots a_n a_1 \cdots a_{n-i}$
for $0 \leq i < n$. The word $\rightshift^i(w)$ is also well-defined
for~$i \geq n$ and then $\rightshift^i(w) = \rightshift^j(w)$ where $j
= i \bmod n$. We define the \emph{ordered conjugacy class} of a word
$w \in \Sigma^n$ as $[w] = (w_1, \ldots, w_n)$ where $w_i =
\rightshift^{i-1}(w)$. It is convenient to think of~$[w]$ as a cycle
of length $n$ with a pointer to a distinguished starting
position. Every position $i$, $1 \leq i \leq n$, on this cycle is
labeled by $a_i$. In particular, $a_1$ is a successor of $a_n$ on this
cycle since the position $1$ is a successor of the position $n$. The
mapping $\rightshift$ moves the pointer to its predecessor. The
(unordered) conjugacy class of $w$ is the multiset $\oneset{w_1,
  \ldots, w_n}$. Whenever there is no confusion, then by abuse of
notation we also write $[w]$ to denote the (unordered) conjugacy class
of $w$. For instance, this is the case if $w$ is in some way
distinguished within its conjugacy class, which is true if $w$ is a
Lyndon word.
A \emph{Lyndon word} is a non-empty word which is the unique
lexicographic minimal element within its conjugacy class. More
formally, let $[w] = (w, w_2, \ldots, w_n)$, then $w\in \Sigma^+$ is a
Lyndon word if $w < w_i$ for all $i \in \oneset{2, \ldots, n}$. Lyndon
words have a lot of nice properties \cite{Lot83}. For instance, Lyndon
words are primitive.  Another interesting fact is the following.

\begin{fact}[Chen, Fox, and Lyndon \cite{cfl58ann}]\label{thm:lyndon}
  Every word $w \in \Sigma^+$ has a unique factorization
$
    w = v_s \cdots v_1
$
  such that $v_1 \leq \cdots \leq v_s$ is a non-decreasing sequence of
  Lyndon words.
\end{fact}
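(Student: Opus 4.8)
The plan is to prove existence and uniqueness separately, relying on two auxiliary facts about Lyndon words that I would establish first: (i) a word $\ell$ is a Lyndon word if and only if $\ell < t$ for every proper nonempty suffix $t$ of $\ell$, and (ii) if $u$ and $v$ are Lyndon words with $u < v$, then the concatenation $uv$ is again a Lyndon word. Both follow directly from the definition together with the primitivity of Lyndon words; fact (ii) is the workhorse for existence and fact (i) for uniqueness.

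For existence, I would start from the trivial factorization of $w = a_1 \cdots a_n$ into its individual letters, each of which is a Lyndon word. Writing an intermediate factorization as $w = u_1 \cdots u_m$ read from left to right, the target condition $v_1 \leq \cdots \leq v_s$ amounts to the sequence being non-increasing from left to right. As long as the current factorization fails to be non-increasing, there are adjacent factors $u_i u_{i+1}$ with $u_i < u_{i+1}$; by fact (ii) their product is a Lyndon word, so I replace the two factors by this single one. Each such merge strictly decreases the number of factors, so the process halts, and at termination no adjacent pair is out of order, which is exactly a non-increasing (left-to-right) sequence of Lyndon words. Reindexing yields $w = v_s \cdots v_1$ with $v_1 \leq \cdots \leq v_s$.

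For uniqueness I would prove the structural claim that in any such factorization the leftmost factor $v_s$ is the longest prefix of $w$ that is a Lyndon word. Granting this claim, uniqueness follows by induction on $\abs{w}$: the factor $v_s$ is determined by $w$ alone, and the remainder $v_{s-1} \cdots v_1$ is a strictly shorter word carrying a non-decreasing Lyndon factorization $v_1 \leq \cdots \leq v_{s-1}$, whose uniqueness is exactly the induction hypothesis. To prove the claim, one direction is immediate, since $v_s$ is itself a Lyndon prefix. For the other direction I suppose some Lyndon prefix $u$ is strictly longer than $v_s$ and derive a contradiction by locating where $u$ ends among the factors $v_s, v_{s-1}, \ldots$ and comparing $u$ with its own suffix that begins at the boundary just after $v_s$, using the sortedness $v_1 \leq \cdots \leq v_s$ together with fact (i).

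The main obstacle is precisely this last comparison. The argument splits according to whether $u$ ends exactly at a factor boundary or strictly inside some factor $v_j$, and further according to whether the relevant factor equals, is a proper prefix of, or lexicographically differs from $v_s$. The delicate point is that a strict inequality $x < y$ between two words does not on its own control the comparison of their extensions unless the prefix case is ruled out; so in each branch one must use either the suffix characterization (i) applied to $v_s$ or the primitivity of Lyndon words to force $u$ to be larger than one of its own proper suffixes, contradicting (i). Organizing these branches uniformly, rather than letting them proliferate into many overlapping sub-cases, is the part that requires genuine care.
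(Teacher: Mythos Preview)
The paper does not prove this statement at all: it is presented as a \emph{Fact} with a citation to Chen, Fox, and Lyndon, and the text immediately moves on to use it. So there is no ``paper's own proof'' to compare against.

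Your plan is a standard and correct route to the Chen--Fox--Lyndon theorem. The two auxiliary facts you single out---(i) the suffix characterization of Lyndon words and (ii) closure of Lyndon words under concatenation when the left factor is strictly smaller---are exactly the lemmas one finds in the classical treatments (e.g.\ Lothaire). Existence via repeated merging of an adjacent increasing pair is clean and terminates for the reason you give. For uniqueness, characterizing $v_s$ as the longest Lyndon prefix is the right invariant; the induction then goes through without difficulty. Your self-identified ``obstacle'' is real but manageable: the case where the putative longer Lyndon prefix $u$ ends strictly inside some $v_j$ is handled by writing $u = v_s \cdots v_{j+1} p$ with $p$ a proper nonempty prefix of $v_j$, noting that $p < v_j \leq v_s \leq u$ by (i) applied to $v_j$ and the sortedness, which makes $p$ a proper suffix of $u$ with $p < u$, contradicting (i) for $u$. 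The boundary case $u = v_s \cdots v_j$ is similar, using $v_j \leq v_s \leq u$ and primitivity of $u$ to get a strict inequality. So the branching you worry about collapses to essentially two short sub-cases rather than a proliferation.
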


An alternative formulation of the above fact is that every word $w$
has a unique factorization $w = v_s^{n_s} \cdots v_1^{n_1}$ where $n_i
\geq 1$ for all $i$ and where $v_1 < \cdots < v_s$ is a strictly
increasing sequence of Lyndon words.  The factorization of $w$ as in
Fact~\ref{thm:lyndon} is called the \emph{Lyndon factorization} of
$w$. It can be computed in linear time using Duval's algorithm
\cite{duv83ja}.

Suppose we are given a multiset $V = \oneset{v_1, \ldots, v_s}$ of
Lyndon words enumerated in non-decreasing order $v_1 \leq \cdots \leq
v_s$.  Now, $V$ uniquely determines the word $w = v_s \cdots
v_1$. Therefore, the Lyndon factorization induces a one-to-one
correspondence between arbitrary words of length $n$ and multisets of
Lyndon words of total length $n$. Of course, by definition of Lyndon
words, the multiset $\oneset{v_1, \ldots, v_s}$ of Lyndon words and
the multiset $\oneset{[v_1], \ldots, [v_s]}$ of conjugacy classes of
Lyndon words are also in one-to-one correspondence.

We extend the order $\leq$ on $\Sigma$ as follows to non-empty words.
Let $w^{\omega} = w w w \cdots$ be the infinite sequences obtained as
the infinite power of $w$. For $u,v \in \Sigma^+$ we write $u
\leq^\omega v$ if either $u^{\omega} = v^{\omega}$ or $u^{\omega} = paq$
and $v^{\omega} = pbr$ for $p \in \Sigma^*$, $a,b \in \Sigma$ with
$a<b$, and infinite sequences $q,r$; phrased differently, $u
\leq^\omega v$ means that the infinite sequences~$u^\omega$ and
$v^\omega$ satisfy $u^\omega \leq v^\omega$.  If $u$ and $v$ have the
same length, then $\leq^\omega$ coincides with the lexicographic order
induced by the order on the letters. For arbitrary words,
$\leq^\omega$ is only a preorder since for example $u \leq^\omega uu$
and $uu \leq^\omega u$.  On the other hand, if $u \leq^\omega v$ and
$v \leq^\omega u$ then $u^{\abs{v}} = v^{\abs{u}}$. Hence, by the
periodicity lemma \cite{FW65}, there exists a common root $p \in
\Sigma^+$ and $g,h \in \N$ such that $u = p^g$ and $v = p^h$.  Also
note that $b \leq ba$ whereas $ba \leq^{\omega} b$ for $a<b$.

Intuitively, the \emph{context of order $k$} of $w$ is the sequence of
the first $k$ letters of $w$. We want this notion to be well-defined
even if $\abs{w} < k$. To this end let $\context_k(w)$ be the prefix
of length $k$ of $w^\omega$, i.e., $\context_k(w)$ consists of the
first $k$ letters on the cycle $[w]$.  Note that our definition of a
context of order $k$ is left-right symmetric to the corresponding
notion used in data compression.  This is due to the fact that typical
compression schemes are applying the BWT or the ST to the reversal of
the input.

An important construction in this paper is the \emph{standard
  permutation} $\pi_w$ on the set of positions $\oneset{1,\ldots, n}$
induced by a word $w = a_1 \cdots a_n \in \Sigma^n$
\cite{gr93jct}. The first step is to introduce a new order $\preceq$
on the positions of $w$ by sorting the letters within $w$ such that
identical letters preserve their order.  More formally, the linear
order $\preceq$ on $\oneset{1,\ldots,n}$ is defined as follows: $i
\preceq j$ if
\begin{equation*}
  a_i < a_j \qquad \text{or} \qquad a_i = a_j \,\text{ and }\, i \leq j.
\end{equation*}
Let $j_1 \prec \cdots \prec j_n$ be the linearization of $\oneset{1,
  \ldots, n}$ according to this new order. Now, the standard
permutation $\pi_w$ is defined by $\pi_w(i) = j_i$.

\begin{example}\label{exa:start}
  Consider the word $w = bcbccbcbcabbaaba$ over the ordered alphabet
  $a < b < c$. We have $\abs{w} = 16$. Therefore, the positions in $w$
  are $\oneset{1, \ldots, 16}$.  For instance, the label of position
  $6$ is $\lbl_w(6) = b$. Its Lyndon factorization is $w = bcbcc \cdot
  bc \cdot bc \cdot abb \cdot aab \cdot a$.  The context of order $7$
  of the prefix $bcbcc$ of length $5$ is $bcbccbc$ and the context of
  order $7$ of the factor $bc$ is $bcbcbcb$. For computing the
  standard permutation we write $w$ column-wise, add positions, and
  then sort the pairs lexicographically:
  \begin{center} \footnotesize
  \begin{tabular}{|c|c|c|}
    \hline
    \ word $w$ \ & \ $w$ with positions \ & \ sorted$\ $  \\
    \hline
    $b$ & $(b,1)$  & $(a,10)$ \\[-1.2pt]
    $c$ & $(c,2)$  & $(a,13)$ \\[-1.2pt]
    $b$ & $(b,3)$  & $(a,14)$ \\[-1.2pt]
    $c$ & $(c,4)$  & $(a,16)$ \\[-1.2pt]
    $c$ & $(c,5)$  & $(b,1)$  \\[-1.2pt]
    $b$ & $(b,6)$  & $(b,3)$  \\[-1.2pt]
    $c$ & $(c,7)$  & $(b,6)$  \\[-1.2pt]
    $b$ & $(b,8)$  & $(b,8)$  \\[-1.2pt]
    $c$ & $(c,9)$  & $(b,11)$ \\[-1.2pt]
    $a$ & $(a,10)$ & $(b,12)$ \\[-1.2pt]
    $b$ & $(b,11)$ & $(b,15)$ \\[-1.2pt]
    $b$ & $(b,12)$ & $(c,2)$  \\[-1.2pt]
    $a$ & $(a,13)$ & $(c,4)$  \\[-1.2pt]
    $a$ & $(a,14)$ & $(c,5)$  \\[-1.2pt]
    $b$ & $(b,15)$ & $(c,7)$  \\[-1.2pt]
    $a$ & $(a,16)$ & $(c,9)$  \\
    \hline
  \end{tabular}
  \end{center}
  This yields the standard permutation
  \begin{equation*}
    \pi_w = \left( 
      \begin{array}{*{16}{p{5.5mm}}}
        1 & 2 & 3 & 4 & 5 & 6 & 7 & 8 & 9 & 10 & 11 & 12 & 13 & 14 & 15 & 16 \\
        10 & 13 & 14 & 16 & 1 & 3 & 6 & 8 & 11 & 12 & 15 & 2 & 4 & 5 & 7 & 9
      \end{array}
      \right).
  \end{equation*}
  The conjugacy class $[w]$ of $w$ is depicted in
  Figure~\ref{sfg:w}; the $i$-th word in $[w]$ is written in the
  $i$-th row.  The last column of the matrix for $[w]$ is the reversal
  $\overline{w}$ of $w$.
\end{example}

\begin{figure}[t]
  \centering \footnotesize
  \subfigure[{Conjugacy class $[w]$}]{\footnotesize
  \begin{tabular}{r|*{16}{p{1.75mm}}|}
    \cline{2-17}
    \textsf{1}  & $b$ & $c$ & $b$ & $c$ & $c$ & $b$ & $c$ & $b$ & $c$ & $a$ & $b$ & $b$ & $a$ & $a$ & $b$ & $a$ \\[-1mm]
    \textsf{2}  & $a$ & $b$ & $c$ & $b$ & $c$ & $c$ & $b$ & $c$ & $b$ & $c$ & $a$ & $b$ & $b$ & $a$ & $a$ & $b$ \\[-1mm]
    \textsf{3}  & $b$ & $a$ & $b$ & $c$ & $b$ & $c$ & $c$ & $b$ & $c$ & $b$ & $c$ & $a$ & $b$ & $b$ & $a$ & $a$ \\[-1mm]
    \textsf{4}  & $a$ & $b$ & $a$ & $b$ & $c$ & $b$ & $c$ & $c$ & $b$ & $c$ & $b$ & $c$ & $a$ & $b$ & $b$ & $a$ \\[-1mm]
    \textsf{5}  & $a$ & $a$ & $b$ & $a$ & $b$ & $c$ & $b$ & $c$ & $c$ & $b$ & $c$ & $b$ & $c$ & $a$ & $b$ & $b$ \\[-1mm]
    \textsf{6}  & $b$ & $a$ & $a$ & $b$ & $a$ & $b$ & $c$ & $b$ & $c$ & $c$ & $b$ & $c$ & $b$ & $c$ & $a$ & $b$ \\[-1mm]
    \textsf{7}  & $b$ & $b$ & $a$ & $a$ & $b$ & $a$ & $b$ & $c$ & $b$ & $c$ & $c$ & $b$ & $c$ & $b$ & $c$ & $a$ \\[-1mm]
    \textsf{8}  & $a$ & $b$ & $b$ & $a$ & $a$ & $b$ & $a$ & $b$ & $c$ & $b$ & $c$ & $c$ & $b$ & $c$ & $b$ & $c$ \\[-1mm]
    \textsf{9}  & $c$ & $a$ & $b$ & $b$ & $a$ & $a$ & $b$ & $a$ & $b$ & $c$ & $b$ & $c$ & $c$ & $b$ & $c$ & $b$ \\[-1mm]
    \textsf{10} & $b$ & $c$ & $a$ & $b$ & $b$ & $a$ & $a$ & $b$ & $a$ & $b$ & $c$ & $b$ & $c$ & $c$ & $b$ & $c$ \\[-1mm]
    \textsf{11} & $c$ & $b$ & $c$ & $a$ & $b$ & $b$ & $a$ & $a$ & $b$ & $a$ & $b$ & $c$ & $b$ & $c$ & $c$ & $b$ \\[-1mm]
    \textsf{12} & $b$ & $c$ & $b$ & $c$ & $a$ & $b$ & $b$ & $a$ & $a$ & $b$ & $a$ & $b$ & $c$ & $b$ & $c$ & $c$ \\[-1mm]
    \textsf{13} & $c$ & $b$ & $c$ & $b$ & $c$ & $a$ & $b$ & $b$ & $a$ & $a$ & $b$ & $a$ & $b$ & $c$ & $b$ & $c$ \\[-1mm]
    \textsf{14} & $c$ & $c$ & $b$ & $c$ & $b$ & $c$ & $a$ & $b$ & $b$ & $a$ & $a$ & $b$ & $a$ & $b$ & $c$ & $b$ \\[-1mm]
    \textsf{15} & $b$ & $c$ & $c$ & $b$ & $c$ & $b$ & $c$ & $a$ & $b$ & $b$ & $a$ & $a$ & $b$ & $a$ & $b$ & $c$ \\[-1mm]
    \textsf{16} & $c$ & $b$ & $c$ & $c$ & $b$ & $c$ & $b$ & $c$ & $a$ & $b$ & $b$ & $a$ & $a$ & $b$ & $a$ & $b$ \\
    \cline{2-17}
  \end{tabular}
  \label{sfg:w}
  } \ 
  \subfigure[Lexicographically sorted]{\footnotesize
  \begin{tabular}{r|*{16}{p{1.75mm}}|}
    \cline{2-17}
    \textsf{5} & $a$ & $a$ & $b$ & $a$ & $b$ & $c$ & $b$ & $c$ & $c$ & $b$ & $c$ & $b$ & $c$ & $a$ & $b$ & $b$ \\[-1mm]
    \textsf{4} & $a$ & $b$ & $a$ & $b$ & $c$ & $b$ & $c$ & $c$ & $b$ & $c$ & $b$ & $c$ & $a$ & $b$ & $b$ & $a$ \\[-1mm]
    \textsf{8} & $a$ & $b$ & $b$ & $a$ & $a$ & $b$ & $a$ & $b$ & $c$ & $b$ & $c$ & $c$ & $b$ & $c$ & $b$ & $c$ \\[-1mm]
    \textsf{2} & $a$ & $b$ & $c$ & $b$ & $c$ & $c$ & $b$ & $c$ & $b$ & $c$ & $a$ & $b$ & $b$ & $a$ & $a$ & $b$ \\[-1mm]
    \textsf{6} & $b$ & $a$ & $a$ & $b$ & $a$ & $b$ & $c$ & $b$ & $c$ & $c$ & $b$ & $c$ & $b$ & $c$ & $a$ & $b$ \\[-1mm]
    \textsf{3} & $b$ & $a$ & $b$ & $c$ & $b$ & $c$ & $c$ & $b$ & $c$ & $b$ & $c$ & $a$ & $b$ & $b$ & $a$ & $a$ \\[-1mm]
    \textsf{7} & $b$ & $b$ & $a$ & $a$ & $b$ & $a$ & $b$ & $c$ & $b$ & $c$ & $c$ & $b$ & $c$ & $b$ & $c$ & $a$ \\[-1mm]
    \textsf{10} & $b$ & $c$ & $a$ & $b$ & $b$ & $a$ & $a$ & $b$ & $a$ & $b$ & $c$ & $b$ & $c$ & $c$ & $b$ & $c$ \\[-1mm]
    \textsf{12} & $b$ & $c$ & $b$ & $c$ & $a$ & $b$ & $b$ & $a$ & $a$ & $b$ & $a$ & $b$ & $c$ & $b$ & $c$ & $c$ \\[-1mm]
    \textsf{1} & $b$ & $c$ & $b$ & $c$ & $c$ & $b$ & $c$ & $b$ & $c$ & $a$ & $b$ & $b$ & $a$ & $a$ & $b$ & $a$ \\[-1mm]
    \textsf{15} & $b$ & $c$ & $c$ & $b$ & $c$ & $b$ & $c$ & $a$ & $b$ & $b$ & $a$ & $a$ & $b$ & $a$ & $b$ & $c$ \\[-1mm]
    \textsf{9} & $c$ & $a$ & $b$ & $b$ & $a$ & $a$ & $b$ & $a$ & $b$ & $c$ & $b$ & $c$ & $c$ & $b$ & $c$ & $b$ \\[-1mm]
    \textsf{11} & $c$ & $b$ & $c$ & $a$ & $b$ & $b$ & $a$ & $a$ & $b$ & $a$ & $b$ & $c$ & $b$ & $c$ & $c$ & $b$ \\[-1mm]
    \textsf{13} & $c$ & $b$ & $c$ & $b$ & $c$ & $a$ & $b$ & $b$ & $a$ & $a$ & $b$ & $a$ & $b$ & $c$ & $b$ & $c$ \\[-1mm]
    \textsf{16} & $c$ & $b$ & $c$ & $c$ & $b$ & $c$ & $b$ & $c$ & $a$ & $b$ & $b$ & $a$ & $a$ & $b$ & $a$ & $b$ \\[-1mm]
    \textsf{14} & $c$ & $c$ & $b$ & $c$ & $b$ & $c$ & $a$ & $b$ & $b$ & $a$ & $a$ & $b$ & $a$ & $b$ & $c$ & $b$ \\
    \cline{2-17}
  \end{tabular}
  \label{sfg:M}
  } \ 
  \subfigure[Sorted by $2$-order contexts]{\footnotesize
  \begin{tabular}{r|*{16}{p{1.75mm}}|}
    \cline{2-17}
    \textsf{5}  & $a$ & $a$ & $b$ & $a$ & $b$ & $c$ & $b$ & $c$ & $c$ & $b$ & $c$ & $b$ & $c$ & $a$ & $b$ & $b$ \\[-1mm]
    \textsf{2}  & $a$ & $b$ & $c$ & $b$ & $c$ & $c$ & $b$ & $c$ & $b$ & $c$ & $a$ & $b$ & $b$ & $a$ & $a$ & $b$ \\[-1mm]
    \textsf{4}  & $a$ & $b$ & $a$ & $b$ & $c$ & $b$ & $c$ & $c$ & $b$ & $c$ & $b$ & $c$ & $a$ & $b$ & $b$ & $a$ \\[-1mm]
    \textsf{8}  & $a$ & $b$ & $b$ & $a$ & $a$ & $b$ & $a$ & $b$ & $c$ & $b$ & $c$ & $c$ & $b$ & $c$ & $b$ & $c$ \\[-1mm]
    \textsf{3}  & $b$ & $a$ & $b$ & $c$ & $b$ & $c$ & $c$ & $b$ & $c$ & $b$ & $c$ & $a$ & $b$ & $b$ & $a$ & $a$ \\[-1mm]
    \textsf{6}  & $b$ & $a$ & $a$ & $b$ & $a$ & $b$ & $c$ & $b$ & $c$ & $c$ & $b$ & $c$ & $b$ & $c$ & $a$ & $b$ \\[-1mm]
    \textsf{7}  & $b$ & $b$ & $a$ & $a$ & $b$ & $a$ & $b$ & $c$ & $b$ & $c$ & $c$ & $b$ & $c$ & $b$ & $c$ & $a$ \\[-1mm]
    \textsf{1}  & $b$ & $c$ & $b$ & $c$ & $c$ & $b$ & $c$ & $b$ & $c$ & $a$ & $b$ & $b$ & $a$ & $a$ & $b$ & $a$ \\[-1mm]
    \textsf{10} & $b$ & $c$ & $a$ & $b$ & $b$ & $a$ & $a$ & $b$ & $a$ & $b$ & $c$ & $b$ & $c$ & $c$ & $b$ & $c$ \\[-1mm]
    \textsf{12} & $b$ & $c$ & $b$ & $c$ & $a$ & $b$ & $b$ & $a$ & $a$ & $b$ & $a$ & $b$ & $c$ & $b$ & $c$ & $c$ \\[-1mm]
    \textsf{15} & $b$ & $c$ & $c$ & $b$ & $c$ & $b$ & $c$ & $a$ & $b$ & $b$ & $a$ & $a$ & $b$ & $a$ & $b$ & $c$ \\[-1mm]
    \textsf{9}  & $c$ & $a$ & $b$ & $b$ & $a$ & $a$ & $b$ & $a$ & $b$ & $c$ & $b$ & $c$ & $c$ & $b$ & $c$ & $b$ \\[-1mm]
    \textsf{11} & $c$ & $b$ & $c$ & $a$ & $b$ & $b$ & $a$ & $a$ & $b$ & $a$ & $b$ & $c$ & $b$ & $c$ & $c$ & $b$ \\[-1mm]
    \textsf{13} & $c$ & $b$ & $c$ & $b$ & $c$ & $a$ & $b$ & $b$ & $a$ & $a$ & $b$ & $a$ & $b$ & $c$ & $b$ & $c$ \\[-1mm]
    \textsf{16} & $c$ & $b$ & $c$ & $c$ & $b$ & $c$ & $b$ & $c$ & $a$ & $b$ & $b$ & $a$ & $a$ & $b$ & $a$ & $b$ \\[-1mm]
    \textsf{14} & $c$ & $c$ & $b$ & $c$ & $b$ & $c$ & $a$ & $b$ & $b$ & $a$ & $a$ & $b$ & $a$ & $b$ & $c$ & $b$ \\
    \cline{2-17}
  \end{tabular}\label{sfg:M2}}
  \vspace*{-2mm}
  \caption{Computing the BWT and the ST of the word $w = bcbccbcbcabbaaba$}
  \label{fig:w:Mw:M2w}
\end{figure}

\section{The Burrows-Wheeler transform}
\label{sec:bwt}

The \emph{Burrows-Wheeler transform} (BWT) maps words $w$ of length
$n$ to pairs $(L,i)$ where $L$ is a word of length $n$ and $i$ is an
index in $\oneset{1, \ldots, n}$. The word $L$ is usually referred to
as the Burrows-Wheeler transform of $w$. In particular, the BWT is not
surjective. We will see below how the BWT works and that it is
one-to-one. It follows that only a fraction of $1/n$ of all possible
pairs $(L,i)$ appears as an image under the BWT. For instance
$(bacd,1)$ where $a<b<c<d$ is not an image under the BWT.

For $w \in \Sigma^+$ we define $M(w) = (w_1, \ldots, w_n)$ where
$\oneset{w_1, \ldots, w_n} = [w]$ and $w_1 \leq \cdots \leq w_n$.
Now, the Burrows-Wheeler transform of $w$ consists of the word
$\BWT(w) = \last(w_1) \cdots \last(w_n)$ and an index $i$ such that $w
= w_i$. Note that in contrast to the usual definition of the BWT, we
are using right shifts; at this point this makes no difference but it
unifies the presentation of succeeding transforms. At first glance, it
is surprising that one can reconstruct $M(w)$ from
$\BWT(w)$. Moreover, if we know the index $i$ of $w$ in the sorted
list $M(w)$, then we can reconstruct $w$ from $\BWT(w)$. One way of
how to reconstruct $M(w)$ is presented in the following lemma. For
later use, we prove a more general statement than needed for computing
the inverse of the BWT.

\begin{lemma}\label{lem:contextretrieval}
  Let $k \in \N$. Let $\bigcup_{i=1}^{s}\, [v_i] = \oneset{w_1,
    \ldots, w_n} \subseteq \Sigma^+$ be a multiset built from
  conjugacy classes $[v_i]$. Let $M = (w_1, \ldots, w_n)$ satisfy
  $\context_k(w_1) \leq \cdots \leq \context_k(w_n)$ and let $L =
  \last(w_1) \cdots \last(w_n)$ be the sequence of the last
  symbols. Then
  \begin{equation*}
    \context_k(w_i) = \lbl_{L}\pi_{L}(i) \cdot
    \lbl_{L}\pi_{L}^2(i)
    \,\cdots\, \lbl_{L}\pi_{L}^k(i)
  \end{equation*}
  where $\pi_{L}^t$ denotes the $t$-fold application of $\pi_{L}$ and
  $\lbl_L \pi_L^t (i) = \lbl_L\bigl(\pi_L^t(i)\bigr)$.
\end{lemma}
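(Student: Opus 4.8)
The plan is to read off each context one letter at a time by walking along the orbit of the standard permutation $\pi_L$, so that the whole statement reduces to two ingredients: a \emph{base identity} identifying the first column of $M$ with the stable sort of $L$, and an \emph{LF-type identity} linking $\pi_L$ to the right-shift $\rightshift$.

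First I would prove the base identity $\first(w_i) = \lbl_L(\pi_L(i))$ for all $i$. Since $k \geq 1$ and $\context_1(w_i) = \first(w_i)$ is the length-$1$ prefix of $\context_k(w_i)$, the hypothesis $\context_k(w_1) \leq \cdots \leq \context_k(w_n)$ forces $\first(w_1) \leq \cdots \leq \first(w_n)$, so the first column of $M$ is non-decreasing. By definition of $\pi_L$, the sequence $\lbl_L(\pi_L(1)), \ldots, \lbl_L(\pi_L(n))$ is the stable, hence non-decreasing, rearrangement of the letters of $L$. Because $\bigcup_i [v_i]$ is a union of full conjugacy classes, within each class the first letters and the last letters of the rotations form the same multiset; summing over classes, the multiset of first letters of the $w_i$ equals the multiset of letters of $L$. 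Two non-decreasing enumerations of one multiset agree, which yields the base identity.

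The core step is the context-level LF identity
\[
  \context_k(\rightshift(w_j)) = \context_k(w_{\pi_L^{-1}(j)}) \qquad \text{for every } j.
\]
Writing $c = \last(w_j) = L_j$, one checks directly that $\context_k(\rightshift(w_p)) = c \cdot \context_{k-1}(w_p)$ for every $p$ with $L_p = c$, so these right-shifted contexts are non-decreasing as $p$ runs through the rows ending in $c$ (using that $\context_{k-1}$ is a prefix of $\context_k$). On the other hand, since $\rightshift$ permutes each conjugacy class, it is a bijection of the multiset $\{w_1, \ldots, w_n\}$ that sends the rows ending in $c$ onto the rows beginning with $c$; by the base identity the latter are exactly rows $b_c+1, \ldots, b_c+m_c$, where $b_c$ and $m_c$ count the letters of $L$ that are $<c$ and $=c$. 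Hence the right-shifts of the rows ending in $c$ and the rows beginning with $c$ are two enumerations, both non-decreasing in $\context_k$, of the same sub-multiset of words, so their context sequences agree position by position. Reading off the $q$-th entry, where $q = \abs{\{p \leq j : L_p = c\}}$ and $\pi_L^{-1}(j) = b_c + q$, gives the displayed identity. I expect this matching to be the main obstacle: because $M$ is sorted only by order-$k$ contexts and ties may be broken arbitrarily, one cannot hope for the stronger word-level identity $w_{\pi_L^{-1}(j)} = \rightshift(w_j)$; phrasing the LF step purely in terms of contexts is exactly what makes it survive repeated contexts and non-primitive inputs.

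Finally I would combine the two ingredients. Taking length-$m$ prefixes of the LF identity (valid for $1 \leq m \leq k$, since lower-order contexts are prefixes of $\context_k$) and substituting $j = \pi_L(i)$ gives, via the base identity $\lbl_L(\pi_L(i)) = \first(w_i)$, the recurrence $\context_m(w_i) = \first(w_i) \cdot \context_{m-1}(w_{\pi_L(i)})$. Iterating this from $m = k$ down to $m = 0$ peels off one leading letter at each step while advancing the index from $i$ to $\pi_L(i)$ to $\pi_L^2(i)$, and so on, yielding $\context_k(w_i) = \first(w_i)\,\first(w_{\pi_L(i)}) \cdots \first(w_{\pi_L^{k-1}(i)})$. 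A last application of the base identity rewrites $\first(w_{\pi_L^{t}(i)}) = \lbl_L(\pi_L^{t+1}(i))$, which is exactly the claimed formula.
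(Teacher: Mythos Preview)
Your proof is correct and follows essentially the same approach as the paper: both hinge on the observation that stably sorting the right-shifts $\rightshift(w_1),\ldots,\rightshift(w_n)$ by their first letter (which is exactly what $\pi_L$ does to $L$) produces a list whose sequence of order-$t$ contexts coincides with that of $M$, because both are non-decreasing enumerations of the same multiset. The only difference is organizational: the paper runs this matching as a single induction on $t$ from $0$ to $k$, whereas you first isolate it as an LF-type identity at level $k$, then specialize to lower levels by taking prefixes and unroll the resulting recurrence --- a presentation more in the spirit of the FM-index literature, but mathematically the same argument.
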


\begin{proof}
  By induction over the context length $t$, we prove that for all $i
  \in \oneset{1, \ldots, n}$ we have $\context_t(w_i) =
  \lbl_{L}\pi_{L}(i) \,\cdots\, \lbl_{L}\pi_{L}^t(i)$. For $t = 0$ we
  have $\context_0(w_i) = \varepsilon$ and hence, the claim is
  trivially true. Let now $0 < t \leq k$. By the induction hypothesis,
  the~$(t-1)$-order context of each $w_i$ is $\lbl_{L} \pi_{L}(i)
  \cdots \lbl_{L} \pi_{L}^{t-1}(i)$. By applying one right-shift, we
  see that the $t$-order context of $\rightshift(w_i)$ is $\lbl_{L}(i)
  \cdot \lbl_{L} \pi_{L}^1(i) \cdots \lbl_{L} \pi_{L}^{t-1}(i)$.

  The list $M$ meets the sort order induced by $k$-order contexts. In
  particular, $(w_1, \ldots, w_n)$ is sorted by $(t-1)$-order
  contexts. Let $(u_1, \ldots, u_n)$ be a stable sort by $t$-order
  contexts of the right-shifts $(\rightshift(w_1), \ldots,
  \rightshift(w_n))$. The construction of $(u_1, \ldots, u_n)$ only
  requires a sorting of the first letters of $(\rightshift(w_1),
  \ldots, \rightshift(w_n))$ such that identical letters preserve
  their order. The sequence of first letters of the words
  $\rightshift(w_1), \ldots, \rightshift(w_n)$ is exactly $L$. By
  construction of $\pi_L$, it follows that $(u_1, \ldots, u_n) =
  (w_{\pi_L(1)}, \ldots,
  w_{\pi_L(n)})$. Since $M$ is built from conjugacy
  classes, the multisets of elements occurring in $(w_1, \ldots, w_n)$
  and $(\rightshift(w_1), \ldots, \rightshift(w_n))$ are
  identical. The same holds for the multisets induced by $(w_1,
  \ldots, w_n)$ and $(u_1, \ldots, u_n)$. Therefore, the sequences of
  $t$-order contexts induced by $(w_1, \ldots w_n)$ and $(u_1, \ldots,
  u_n)$ are identical. Moreover, we conclude
  \begin{align*}
    \context_t(w_i) 
    = \context_t(u_i)
    = \context_t(w_{\pi_L(i)})
    = \lbl_{L}\pi_{L}(i) \cdot \lbl_L \pi_L^2(i) 
    \,\cdots\, \lbl_{L}\pi_{L}^t(i)
  \end{align*}
  which completes the induction. We note that in general $u_i \neq
  w_i$ since the sort order of $M$ beyond $k$-order contexts is
  arbitrary. Moreover, for $t=k+1$ the property $\context_t(w_i) =
  \context_t(u_i)$ does not need to hold (even though the multisets
  of $(k+1)$-order contexts coincide).
  \qed
\end{proof}

Note that in Lemma~\ref{lem:contextretrieval} we do not require that
all $v_i$ have the same length. Applying the BWT to conjugacy classes
of words with different lengths has also been used for the
\emph{Extended BWT} \cite{mrrs07tcs}. 

\begin{corollary}
  The BWT is invertible, i.e., given $(\BWT(w),i)$ where $i$ is the
  index of~$w$ in $M(w)$ one can reconstruct the word $w$.
\end{corollary}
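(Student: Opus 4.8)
The plan is to invoke Lemma~\ref{lem:contextretrieval} for the single conjugacy class $[w]$, taking the context length to be $k = n$ where $n = \abs{w}$. Concretely, the multiset $\bigcup_{i=1}^{1} [v_i]$ is $[w] = \oneset{w_1, \ldots, w_n}$, the sorted list is $M = M(w) = (w_1, \ldots, w_n)$, and the word of last letters is $L = \BWT(w)$. Before applying the lemma I must confirm its hypothesis that $M$ is sorted by $n$-order contexts. Every conjugate $w_i$ has length exactly $n$, so $\context_n(w_i)$, the prefix of length $n$ of $w_i^\omega$, equals $w_i$ itself; moreover on words of common length $n$ the relation $\leq$ is just the lexicographic order. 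Hence the defining property $w_1 \leq \cdots \leq w_n$ of $M(w)$ coincides with the required condition $\context_n(w_1) \leq \cdots \leq \context_n(w_n)$.

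Once the hypothesis is in place, the lemma gives for every index $i$
\begin{equation*}
  w_i = \context_n(w_i) = \lbl_L \pi_L(i) \cdot \lbl_L \pi_L^2(i) \,\cdots\, \lbl_L \pi_L^n(i).
\end{equation*}
The entire right-hand side is computable from $L = \BWT(w)$ alone: the standard permutation $\pi_L$ is determined by $L$, and $\lbl_L$ reads letters off $L$. Thus the whole sorted list $M(w)$ can be reconstructed from $\BWT(w)$. Supplying the given index $i$, for which $w = w_i$ holds by the definition of the BWT, recovers $w$ from $(\BWT(w), i)$, as claimed.

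I expect no serious obstacle, since the corollary is just the specialization of Lemma~\ref{lem:contextretrieval} to one conjugacy class whose members all have the same length. The only points needing care are the identification of the two sort orders, namely that ordinary lexicographic sorting of length-$n$ conjugates is the same as sorting by $n$-order contexts, and the remark that the periodic extension implicit in $\context_k$ plays no role because we stop at $k = n$. No primitivity hypothesis is required: if $w$ is a proper power, then $[w]$ is a true multiset with repeated rotations, but the lemma is stated for multisets and relies on a stable sort, so the displayed formula still returns the correct conjugate at each index.
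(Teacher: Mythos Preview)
Your proof is correct and follows essentially the same approach as the paper: apply Lemma~\ref{lem:contextretrieval} with $k = \abs{w}$ to the single conjugacy class $[w]$, so that $\context_k(w_i) = w_i$ and the formula $w = w_i = \lbl_L\pi_L(i)\cdots\lbl_L\pi_L^{\abs{L}}(i)$ reconstructs $w$ from $L$ and $i$. Your additional care in verifying the sort-order hypothesis and noting that primitivity is unnecessary is sound but not present in the paper's terser version.
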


\begin{proof}
  We set $k = \abs{w}$. Let $M = M(w)$ and $L = \BWT(w)$. Now, by
  Lemma~\ref{lem:contextretrieval} we see that
  \begin{equation*}
    w = w_i = \context_k(w_i) = \lbl_L\pi_L^1(i) \cdots \lbl_L\pi_L^{\abs{L}}(i).
  \end{equation*}
  In particular, $w = \lbl_L\pi_L^1(i) \cdots
  \lbl_L\pi_L^{\abs{L}}(i)$ only depends on $L$ and $i$.  \qed
\end{proof}

\begin{remark}
  In the special case of the BWT it is possible to compute the $i$-th
  element $w_i$ of $M(w)$ by using the inverse $\pi_L^{-1}$ of the
  permutation $\pi_L$: 
  \begin{equation*}
    w_i = \lbl_L\pi_L^{-\abs{w_i}+1}(i) \cdots \lbl_L\pi_L^{-1}(i) \lbl_L(i).
  \end{equation*}
  This justifies the usual way of computing the inverse of
  $(\BWT(w),i)$ from right to left (by using the restriction of
  $\pi_L^{-1}$ to the cycle containing the element $i$). The
  motivation is that the (required cycle of the) inverse $\pi_L^{-1}$
  seems to be easier to compute than the standard permutation $\pi_L$.
\end{remark}

\begin{example}\label{exa:bwt}
  We compute the BWT of $w = bcbccbcbcabbaaba$ from
  Example~\ref{exa:start}. The lexicographically sorted list $M(w)$
  can be found in Figure~\ref{sfg:M}.  This yields the transform
  $(\BWT(w),i) = (bacbbaaccacbbcbb,10)$ where $L = \BWT(w)$ is the
  last column of the matrix $M(w)$ and $w$ is the $i$-th row in
  $M(w)$. The standard permutation of $L$ is
  \begin{equation*}
    \pi_L = \left(\begin{array}{*{16}{p{5.5mm}}}
        1 & 2 & 3 & 4 & 5 & 6 & 7 & 8 & 9 & 10 & 11 & 12 & 13 & 14 & 15 & 16 \\
        2 & 6 & 7 & 10 & 1 & 4 & 5 & 12 & 13 & 15 & 16 & 3 & 8 & 9 & 11 & 14
      \end{array}\right).
  \end{equation*}
  Now, $\pi_L^1(10) \cdots \pi_L^{16}(10)$ gives us the  following
  sequence of positions starting with $\pi_L(10) = 15$:
  \begin{equation*}
    15 
    \stackrel{\pi_L}{\mapsto} 11
    \stackrel{\pi_L}{\mapsto} 16 
    \stackrel{\pi_L}{\mapsto} 14 
    \stackrel{\pi_L}{\mapsto} 9 
    \stackrel{\pi_L}{\mapsto} 13 
    \stackrel{\pi_L}{\mapsto} 8 
    \stackrel{\pi_L}{\mapsto} 12 
    \stackrel{\pi_L}{\mapsto} 3 
    \stackrel{\pi_L}{\mapsto} 7
    \stackrel{\pi_L}{\mapsto} 5 
    \stackrel{\pi_L}{\mapsto} 1 
    \stackrel{\pi_L}{\mapsto} 2 
    \stackrel{\pi_L}{\mapsto} 6 
    \stackrel{\pi_L}{\mapsto} 4 
    \stackrel{\pi_L}{\mapsto} 10.
  \end{equation*}
  Applying the labeling function $\lbl_L$ to this sequence of positions yields
  \begin{align*}
    & \ 
    \lbl_L(15)
    \lbl_L(11)
    \lbl_L(16)
    \lbl_L(14)
    \lbl_L(9)
    \lbl_L(13)
    \lbl_L(8)
    \lbl_L(12) \\
    & \quad
    \cdot \lbl_L(3)
    \lbl_L(7)
    \lbl_L(5)
    \lbl_L(1)
    \lbl_L(2)
    \lbl_L(6)
    \lbl_L(4)
    \lbl_L(10) \\
    &= bcbccbcbcabbaaba = w,
  \end{align*}
  i.e., we have successfully reconstructed the input $w$ from $(\BWT(w),i)$.
\end{example}

\section{The bijective Burrows-Wheeler transform}
\label{sec:lbwt}

Now we are ready to give a comprehensive description of Scott's
bijective variant of the BWT and to prove its correctness.  It maps a
word of length~$n$ to a word of length~$n$---without any index or
end-of-string symbol being involved. The key ingredient is the Lyndon
factorization: Suppose we are computing the BWT of a Lyndon word $v$,
then we do not need an index since we know that $v$ is the first
element of the list $M(v)$. This leads to the computation of a
multi-word BWT of the Lyndon factors of the input.

The bijective BWT of a word $w$ of length $n$ is defined as follows.
Let $w = v_s \cdots v_1$ with $v_s \geq \cdots \geq v_1$ be the Lyndon
factorization of $w$. Let $\bijM(w) = (u_1, \ldots, u_n)$ where $u_1
\leq^\omega \cdots \leq^{\omega} u_n$ and where the multiset
$\oneset{u_1, \ldots, u_n} = \bigcup_{i=1}^s [v_i]$.  Then, the
bijective BWT of $w$ is $\bijBWT(w) = \last(u_1) \cdots
\last(u_n)$. The \emph{S} in $\bijBWT$ is for \emph{Scottified}.
Note that if $w$ is a power of a Lyndon word, then
$\bijBWT(w) = \BWT(w)$.

In some sense, the bijective BWT can be thought of as the composition
of the Lyndon factorization \cite{cfl58ann} with the inverse of the
Gessel-Reutenauer transform \cite{gr93jct}. In particular, a first
step towards a bijective BWT can be found in a 1993 article by Gessel
and Reutenauer \cite{gr93jct} (prior to the publication of the BWT
\cite{bw94tr}). The link between the Gessel-Reutenauer transform and
the BWT was pointed out later by Crochemore et al.\ \cite{cdp05tcs}.
A similar approach as in the bijective BWT has been employed by
Mantaci et al.\ \cite{mrrs05cpm}; instead of the Lyndon factorization
they used a decomposition of the input into blocks of equal
length. The output of this variant is a word and a sequence of indices
(one for each block).
In its current form, the bijective BWT has been proposed by Scott
\cite{scott09} in a newsgroup posting in 2007.  Gil and Scott gave an
accessible version of the transform, an independent proof of its
correctness, and they
tested its performance in data compression \cite{gs09submitted}.  The
outcome of these tests is that the bijective BWT beats the usual BWT
on almost all files of the Calgary Corpus \cite{bwc89acm} by at least
a few hundred bytes which exceeds the gain of just saving the rotation
index.

\begin{lemma}\label{lem:cycleretrieval}
  Let $w = v_s \cdots v_1$ with $v_s \geq \cdots \geq v_1$ be the
  Lyndon factorization of~$w$, let $\bijM(w) = (u_1, \ldots, u_n)$,
  and let $L = \bijBWT(w)$. Consider the cycle $C$ of the permutation
  $\pi_L$ which contains the element $1$ and let $d$ be the length of
  $C$. Then $\lbl_L \pi_L^1(1) \cdots \lbl_L \pi_L^{d}(1) = v_1$.
\end{lemma}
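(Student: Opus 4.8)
The plan is to reduce the statement to Lemma~\ref{lem:contextretrieval} and then to analyse the cyclic structure of $\pi_L$ on the positions belonging to the conjugacy class $[v_1]$. First I observe that the multiset $\oneset{u_1,\ldots,u_n} = \bigcup_{i=1}^s [v_i]$ is a union of conjugacy classes, and that the $\leq^\omega$-sorted list $\bijM(w)$ is in particular sorted by $k$-order contexts for every $k$, since $u \leq^\omega u'$ implies $\context_k(u) \leq \context_k(u')$. Hence Lemma~\ref{lem:contextretrieval} applies with arbitrary $k$ and gives $\context_k(u_i) = \lbl_L\pi_L(i)\cdots\lbl_L\pi_L^k(i)$ for every position $i$. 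The second preparatory step is to identify the first entry as $u_1 = v_1$. Indeed, $v_1$ is the $\leq^\omega$-minimum of the whole multiset: each Lyndon word is the $\leq^\omega$-minimum of its own (equal-length) conjugacy class, and among the Lyndon factors the $\leq^\omega$-order agrees with the lexicographic order, so the lexicographically least factor $v_1$ is also $\leq^\omega$-least. I will use here the standard fact that on Lyndon words the lexicographic order coincides with $\leq^\omega$ (see \cite{Lot83}); the only nontrivial case is when one factor is a proper prefix of another, which is settled by the suffix characterisation of Lyndon words.

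With $u_1 = v_1$ in hand, letting $k \to \infty$ in the context formula shows that the infinite sequence $\lbl_L\pi_L^1(1)\,\lbl_L\pi_L^2(1)\cdots$ equals $v_1^\omega$; in particular $\lbl_L\pi_L^t(1) = (v_1^\omega)_t$ is periodic with period exactly $\abs{v_1}$ (as $v_1$ is primitive), and the product of the first $\abs{v_1}$ of these letters is $v_1$. Applying the same context formula at an arbitrary index $j = \pi_L^t(1)$ gives $\context_k(u_{\pi_L^t(1)}) = (v_1^\omega)_{t+1}\cdots(v_1^\omega)_{t+k}$ for all $k$, so $u_{\pi_L^t(1)}$ and a suitable rotation of $v_1$ share the same $\omega$-power; since rotations of the primitive word $v_1$ are primitive, this forces $u_{\pi_L^t(1)}$ to be a uniquely determined rotation $\rho_t$ of $v_1$, with $\rho_0 = v_1$, the $\rho_0,\ldots,\rho_{\abs{v_1}-1}$ pairwise distinct, and $\rho_{t+\abs{v_1}} = \rho_t$. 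Thus reading labels along the orbit of $1$ spells $v_1^\omega$, and it only remains to prove that the orbit closes after exactly $\abs{v_1}$ steps.

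To control the cycle length I group positions by their word. Let $r$ be the multiplicity of $v_1$ among the factors $v_1,\ldots,v_s$; then each rotation of $v_1$ occurs exactly $r$ times in the multiset, and the set $B_t$ of positions carrying the word $\rho_t$ has size $r$. Because distinct multiset elements that are $\leq^\omega$-equal are in fact equal words (equal $\omega$-power plus primitivity), each $B_t$ is a contiguous block of $\bijM(w)$, and $B_0$ is the block of the global minimum, i.e.\ $B_0 = \oneset{1,\ldots,r}$. The context formula applied at an arbitrary $q \in B_t$ shows $u_{\pi_L(q)} = \rho_{t+1}$, so $\pi_L(B_t) = B_{t+1}$ (indices mod $\abs{v_1}$). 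Moreover all positions of a block share the same $L$-label, and for positions $p < p'$ with the same $L$-label the rank $\pi_L^{-1}$ respects the position order; hence $\pi_L^{-1}$, and therefore $\pi_L$, maps each block to the next order-isomorphically, sending the $j$-th smallest position of $B_t$ to the $j$-th smallest position of $B_{t+1}$. Chaining these bijections around $B_0 \to B_1 \to \cdots \to B_{\abs{v_1}-1} \to B_0$ and using $B_0 = \oneset{1,\ldots,r}$ yields $\pi_L^{\abs{v_1}}(j) = j$ for $1 \le j \le r$; in particular $\pi_L^{\abs{v_1}}(1) = 1$, while $\pi_L^t(1) \in B_t$ differs from $1 \in B_0$ for $0 < t < \abs{v_1}$. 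Hence the cycle $C$ through $1$ has length $d = \abs{v_1}$, and $\lbl_L\pi_L^1(1)\cdots\lbl_L\pi_L^{d}(1) = v_1$ as claimed.

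The routine parts are the two applications of Lemma~\ref{lem:contextretrieval}. The delicate point, and the one I would spend the most care on, is the case of a repeated smallest Lyndon factor ($r > 1$): here several positions carry the word $v_1$, and one must verify that the orbit of position $1$ returns to $1$ rather than to another copy of $v_1$. This is exactly what the order-isomorphism $B_t \to B_{t+1}$ guarantees, so the heart of the argument is the monotonicity of $\pi_L^{-1}$ on positions sharing an $L$-label, together with the contiguity of the blocks $B_t$.
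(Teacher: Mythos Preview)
Your proof is correct and follows essentially the same approach as the paper's: both arguments apply Lemma~\ref{lem:contextretrieval} to read off $v_1^\omega$ along the orbit of position $1$, and then use the monotonicity of $\pi_L$ on positions sharing the same $L$-label to show that the orbit of $1$ always hits the \emph{first} copy of each rotation of $v_1$ in $\bijM(w)$, forcing the cycle to close after exactly $\abs{v_1}$ steps. Your block formulation $B_0\to B_1\to\cdots\to B_0$ with order-isomorphisms is just a repackaging of the paper's inductive ``first occurrence'' argument, and your multiplicity $r$ plays the role of the paper's~$z$; you are also more explicit than the paper in justifying $u_1=v_1$ via the compatibility of lexicographic order and $\leq^\omega$ on Lyndon words.
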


\begin{proof}
  By Lemma~\ref{lem:contextretrieval} we see that $\bigl(
  \lbl_L \pi_L^1(1) \cdots \lbl_L \pi_L^{d}(1) \bigr)^{\abs{v_1}} =
  v_1^{d}$. Since $v_1$ is primitive it follows $\lbl_L
  \pi_L^1(1) \cdots \lbl_L \pi_L^{d}(1) = v_1^z$ for some $z \in
  \N$. In particular, the Lyndon factorization of $w$ ends with
  $v_1^z$.

  Let $U$ be the subsequence of $\bijM(w)$ which consists of those
  $u_i$ which come from this last factor $v_1^z$. The sequence $U$
  contains each right-shift of $v_1$ exactly $z$ times. Moreover, the
  sort-order within $U$ depends only on $\abs{v_1}$-order contexts.

  The element $v_1 = u_1$ is the first element in $U$ since $v_1$ is a
  Lyndon word. In particular, $\pi_L^0(1) = 1$ is the first occurrence
  of $\rightshift^{0}(v_1) = v_1$ within $U$. Suppose $\pi_L^j(1)$ is
  the first occurrence of $\rightshift^{j}(v_1)$ within $U$. Let
  $\pi_L^{j}(1) = i_1 < \cdots < i_z$ be the indices of all
  occurrences of $\rightshift^{j}(v_1)$ in $U$. By construction of
  $\pi_L$, we have $\pi_L(i_1) < \cdots < \pi_L(i_z)$ and therefore
  $\pi_L^{j+1}(1)$ is the first occurrence of $\rightshift^{j+1}(v_1)$
  within $U$.  Inductively, $\pi_L^j(1)$ always refers to the first
  occurrence of $\rightshift^{j}(v_1)$ within $U$ (for all $j \in
  \N$).  In particular it follows that $\pi_L^{\abs{v_1}}(1) = 1$ and
  $z=1$.
  \qed
\end{proof}

\begin{theorem}\label{thm:bijbwt}
  The bijective BWT is invertible, i.e., given $\bijBWT(w)$ one can
  reconstruct the word $w$.
\end{theorem}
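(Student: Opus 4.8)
The plan is to reconstruct $w$ by recovering its multiset of Lyndon factors $\oneset{v_1, \ldots, v_s}$ from $L \coloneq \bijBWT(w)$ alone, since this multiset determines $w = v_s \cdots v_1$ through the one-to-one correspondence between words and multisets of Lyndon words noted after Fact~\ref{thm:lyndon}. The standard permutation $\pi_L$ is computable from $L$, so I may freely use its cycle structure. I would argue by induction on $n = \abs{w}$, the empty word being the trivial base case. The first step is to locate the cycle $C$ of $\pi_L$ that contains the position $1$; by Lemma~\ref{lem:cycleretrieval} this cycle has length $d = \abs{v_1}$ and reading its labels yields the smallest Lyndon factor $v_1 = \lbl_L \pi_L^1(1) \cdots \lbl_L \pi_L^{d}(1)$, using that $u_1 = v_1$ because $v_1$ is $\leq^\omega$-minimal in $\bijM(w)$.

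Having extracted $v_1$, I would delete from $L$ all positions belonging to $C$ and read off the labels of the surviving positions, in increasing order of position, to form a shorter word $L'$. The heart of the argument is the identity $L' = \bijBWT(v_s \cdots v_2)$. Here I use the inductive content of Lemma~\ref{lem:cycleretrieval}: the cycle $C$ passes through exactly one occurrence of each right-shift $\rightshift^{j}(v_1)$, so deleting $C$ removes precisely one complete copy of the conjugacy class $[v_1]$ from the multiset $\bigcup_{i=1}^{s}[v_i]$ underlying $\bijM(w)$, leaving $\bigcup_{i=2}^{s}[v_i]$. This is exactly the multiset of the word $v_s \cdots v_2$, whose Lyndon factorization is $v_s \geq \cdots \geq v_2$; and since deleting entries from a $\leq^\omega$-sorted list preserves the sorted order, the surviving subsequence of $\bijM(w)$ is literally $\bijM(v_s \cdots v_2)$. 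Taking last letters gives $L' = \bijBWT(v_s \cdots v_2)$, to which the induction hypothesis applies and reconstructs $v_s \cdots v_2$. Adjoining $v_1$ yields the full multiset $\oneset{v_1, \ldots, v_s}$ and hence $w$.

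I expect the identity $L' = \bijBWT(v_s \cdots v_2)$ to be the main obstacle, and everything reduces to the single combinatorial fact that the cycle of position $1$ meets each rotation of $v_1$ exactly once. Once this is granted, the matching of the underlying multisets and the stability of $\leq^\omega$-sorting under deletion are routine, and the theorem follows by the induction just described.
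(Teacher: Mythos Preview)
Your proposal is correct and follows essentially the same route as the paper: both proofs proceed by induction, use Lemma~\ref{lem:cycleretrieval} to read off $v_1$ from the $\pi_L$-cycle through position~$1$, delete that cycle to obtain $L' = \bijBWT(v_s \cdots v_2)$, and recurse. The only cosmetic difference is that the paper makes the monotone bijection $\alpha : C_2 \cup \cdots \cup C_t \to \{1,\ldots,|L'|\}$ explicit and verifies the intertwining relation $\alpha\,\pi_L = \pi_{L'}\,\alpha$, which lets it state a closed formula $w = \lbl_L\pi_L^1(i_t)\cdots\lbl_L\pi_L^{d_t}(i_t)\;\cdots\;\lbl_L\pi_L^1(i_1)\cdots\lbl_L\pi_L^{d_1}(i_1)$ in terms of \emph{all} cycles of $\pi_L$ at once, whereas you recompute $\pi_{L'}$ from scratch at each inductive step; your multiset-plus-stability argument for $L' = \bijBWT(v_s\cdots v_2)$ is exactly what underlies the paper's assertion that ``$L'$ can be obtained from $L$ by removing all positions occurring in the cycle $C_1$.''
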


\begin{proof}
  Let $L = \bijBWT(w)$ and let $w = v_s \cdots v_1$ with $v_s \geq
  \cdots \geq v_1$ be the Lyndon factorization of $w$. Each
  permutation admits a cycle structure.  We decompose the standard
  permutation $\pi_L$ into cycles $C_1, \ldots, C_t$. Let $i_j$ be the
  smallest element of the cycle $C_j$ and let $d_j$ be the length of
  $C_j$. We can assume that $1 = i_1 < \cdots < i_t$.

  We claim that $t = s$, $d_j = \abs{v_j}$, and $\lbl_L \pi_L^1(i_j)
  \cdots \lbl_L \pi_L^{d_j}(i_j) = v_j$.  By
  Lemma~\ref{lem:cycleretrieval} we have $\lbl_L \pi_L^1(i_1) \cdots
  \lbl_L \pi_L^{d_1}(i_1) = v_1$. Let $\pi_L'$ denote the restriction
  of $\pi_L$ to the set $C = C_2 \cup \cdots \cup C_t$, where by abuse
  of notation $C_2 \cup \cdots \cup C_t$ denotes the set of all
  elements occurring in $C_2, \ldots, C_t$.  Let $L' = \bijBWT(v_s
  \cdots v_2)$. The word $L'$ can be obtained from $L$ by removing all
  positions occurring in the cycle $C_1$. This yields a monotone
  bijection
  \begin{equation*}
    \alpha : C \to \oneset{1, \ldots, \abs{L'}}
  \end{equation*}
  such that $\lbl_L(i) = \lbl_{L'}\alpha(i)$ and $\alpha\pi_L(i) =
  \pi_{L'}\alpha(i)$ for all $i \in C$. In particular, $\pi_{L'}$ has
  the same cycle structure as $\pi_L'$ and $1 = \alpha(i_2) < \cdots <
  \alpha(i_t)$ is the sequence of the minimal elements within
  the cycles. By induction on the number of Lyndon factors,
  \begin{align*}
    v_s \cdots v_2 
    &= 
    \lbl_{L'} \pi_{L'}^1\alpha(i_t) \cdots \lbl_{L'} \pi_{L'}^{d_t} \alpha(i_t) 
    \;\cdots\;
    \lbl_{L'} \pi_{L'}^1\alpha(i_2) \cdots \lbl_{L'} \pi_{L'}^{d_2}(i_2) \\
    &=
    \lbl_{L'}\alpha\pi_{L}^1(i_t) \cdots \lbl_{L'}\alpha\pi_{L}^{d_t}(i_t) 
    \;\cdots\;
    \lbl_{L'}\alpha\pi_{L}^1(i_2) \cdots \lbl_{L'}\alpha\pi_{L}^{d_2}(i_2) \\
    &= 
    \lbl_{L}\pi_{L}^1(i_t) \cdots \lbl_{L}\pi_{L}^{d_t}(i_t) 
    \;\cdots\;
    \lbl_{L}\pi_{L}^1(i_2) \cdots \lbl_{L}\pi_{L}^{d_2}(i_2).
  \end{align*}
  Appending $\lbl_L \pi_L^1(i_1) \cdots \lbl_L \pi_L^{d_1}(i_1) =
  v_1$ to the last line allows us to reconstruct $w$ by
  \begin{equation*}
    w = \lbl_L \pi_L^1(i_t) \cdots \lbl_L \pi_L^{d_t}(i_t) 
    \;\cdots\;
    \lbl_L \pi_L^1(i_1) \cdots \lbl_L \pi_L^{d_1}(i_1).
  \end{equation*}
  Moreover, $t=s$ and $d_j = \abs{v_j}$. We note that this formula for
  $w$ only depends on $L$ and does not require any index to an element
  in $\bijM(w)$.
  \qed
\end{proof}

\begin{example}\label{exa:bijbwt}
  We again consider the word $w = bcbccbcbcabbaaba$ from
  Example~\ref{exa:start} and its Lyndon factorization $w = v_6 \cdots
  v_1$ where $v_6 = bcbcc$, $v_5 = bc$, $v_4 = bc$, $v_3 = abb$, $v_2
  = aab$, and $v_1 = a$. The lists $([v_1], \ldots, [v_6])$ and $\bijM(w)$
  are:
  \begin{center} \footnotesize
    \begin{tabular}{r|*{5}{p{1.75mm}}|}
      \cline{2-6}
      & \multicolumn{5}{|c|}{\ $([v_1], \ldots, [v_6])\ $\vphantom{$M^{M^m}$}} \\[0.1mm]
      \cline{2-6}
      \textsf{1}  & $a$ &   &   &   & \\[-1mm] 
      \textsf{2}  & $a$ & $a$ & $b$ &   & \\[-1mm] 
      \textsf{3}  & $b$ & $a$ & $a$ &   & \\[-1mm]
      \textsf{4}  & $a$ & $b$ & $a$ &   & \\[-1mm]
      \textsf{5}  & $a$ & $b$ & $b$ &   & \\[-1mm] 
      \textsf{6}  & $b$ & $a$ & $b$ &   & \\[-1mm]
      \textsf{7}  & $b$ & $b$ & $a$ &   & \\[-1mm]
      \textsf{8}  & $b$ & $c$ &   &   & \\[-1mm] 
      \textsf{9}  & $c$ & $b$ &   &   & \\[-1mm]
      \textsf{10} & $b$ & $c$ &   &   & \\[-1mm] 
      \textsf{11} & $c$ & $b$ &   &   & \\[-1mm]
      \textsf{12} & $b$ & $c$ & $b$ & $c$ & $c$ \\[-1mm] 
      \textsf{13} & $c$ & $b$ & $c$ & $b$ & $c$ \\[-1mm]
      \textsf{14} & $c$ & $c$ & $b$ & $c$ & $b$ \\[-1mm]
      \textsf{15} & $b$ & $c$ & $c$ & $b$ & $c$ \\[-1mm]
      \textsf{16} & $c$ & $b$ & $c$ & $c$ & $b$ \\
      \cline{2-6}
    \end{tabular}
    \qquad\qquad
    \begin{tabular}{r|*{5}{p{1.75mm}}|}
      \cline{2-6}
      & \multicolumn{5}{|c|}{\ \ \ \ \;\,$\bijM(w)$\vphantom{$M^{M^m}$}\,\;\ \ \ $\ $} \\[0.1mm]
      \cline{2-6}
      \textsf{1}  & $a$ &     &   &   & \\[-1mm] 
      \textsf{2}  & $a$ & $a$ & $b$ &   & \\[-1mm] 
      \textsf{4}  & $a$ & $b$ & $a$ &   & \\[-1mm]
      \textsf{5}  & $a$ & $b$ & $b$ &   & \\[-1mm] 
      \textsf{3}  & $b$ & $a$ & $a$ &   & \\[-1mm]
      \textsf{6}  & $b$ & $a$ & $b$ &   & \\[-1mm]
      \textsf{7}  & $b$ & $b$ & $a$ &   & \\[-1mm]
      \textsf{8}  & $b$ & $c$ &   &   & \\[-1mm] 
      \textsf{10} & $b$ & $c$ &   &   & \\[-1mm] 
      \textsf{12} & $b$ & $c$ & $b$ & $c$ & $c$ \\[-1mm] 
      \textsf{15} & $b$ & $c$ & $c$ & $b$ & $c$ \\[-1mm]
      \textsf{9}  & $c$ & $b$ &   &   & \\[-1mm]
      \textsf{11} & $c$ & $b$ &   &   & \\[-1mm]
      \textsf{13} & $c$ & $b$ & $c$ & $b$ & $c$ \\[-1mm]
      \textsf{16} & $c$ & $b$ & $c$ & $c$ & $b$ \\[-1mm]
      \textsf{14} & $c$ & $c$ & $b$ & $c$ & $b$ \\
      \cline{2-6}
    \end{tabular}
  \end{center}
  Hence, we obtain $L = \bijBWT(w) = abababaccccbbcbb$ as the sequence
  of the last symbols of the words in $\bijM(w)$. The standard
  permutation $\pi_L$ induced by $L$ is
  \begin{equation*}
    \pi_L = \left(\begin{array}{*{16}{p{5.5mm}}}
        1 & 2 & 3 & 4 & 5 & 6 & 7 & 8 & 9 & 10 & 11 & 12 & 13 & 14 & 15 & 16 \\
        1 & 3 & 5 & 7 & 2 & 4 & 6 & 12 & 13 & 15 & 16 & 8 & 9 & 10 & 11 & 14
      \end{array}\right)
  \end{equation*}
  The cycles of $\pi_L$ arranged by their smallest elements are $C_1 =
  (1)$, $C_2 = (2, 3, 5)$, $C_3 = (4, 7, 6)$, $C_4 = (8, 12)$, $C_5 =
  (9, 13)$, and $C_6 = (10, 15, 11, 16, 14)$. Applying the labeling
  function $\lbl_L$ to the cycle $C_i$ (starting with the second
  element) yields the Lyndon factor $v_i$. With this procedure, we
  reconstructed $w = v_6 \cdots v_1$ from $L = \bijBWT(w)$.
\end{example}

\section{The sort transform}\label{sec:sorttransform}
\label{sec:st}

The \emph{sort transform} (ST) is a BWT
where we only sort the conjugates of the input up to a given depth $k$
and then we are using the index of the conjugates as a tie-breaker.
Depending on the depth $k$ and the implementation details this can
speed up compression (while at the same time slightly slowing down
decompression).

In contrast to the usual presentation of the ST, we are using right
shifts. This defines a slightly different version of the ST. The
effect is that the order of the symbols occurring in some particular
context is reversed. This makes sense, because in data compression the
ST is applied to the reversal of a word. Hence, in the ST of the
reversal of $w$ the order of the symbols in some particular context is
the same as in $w$. More formally, suppose $\overline{w} = x_0 c a_1
x_1 c a_2 x_2 \cdots c a_s x_s$ for $c \in \Sigma^+$ then in the sort
transform of order $\abs{c}$ of $w$, the order of the occurrences of
the letters $a_i$ is not changed.  This property can enable better
compression ratios on certain data.

While the standard permutation is induced by a sequence of letters
(i.e., a word) we now generalize this concept to sequences of words.
For a list of non-empty words $V = (v_1, \ldots, v_n)$ we now define
the \emph{$k$-order standard permutation} $\nu_{k,V}$ induced by $V$.
As for the standard permutation, the first step is the construction of
a new linear order $\preceq$ on $\oneset{1,\ldots,n}$. We define $i
\preceq j$ by the condition
\begin{equation*}
  \context_k(v_{i}) < \context_k(v_{j})
  \qquad \text{or} \qquad
  \context_k(v_{i}) = \context_k(v_{j}) \,\text{ and }\,
  i \leq j.
\end{equation*}
Let $j_1 \prec \cdots \prec j_n$ be the linearization of
$\oneset{1,\ldots,n}$ according to this new order. The idea is that we
sort the line numbers of $v_1, \ldots, v_n$ by first considering the
$k$-order contexts and, if these are equal, then use the line numbers
as tie-breaker.  As before, the linearization according to $\preceq$
induces a permutation $\nu_{k,V}$ by setting $\nu_{k,V}(i) = j_i$.
Now, $\nu_{k,V}(i)$ is the position of $v_i$ if we are sorting $V$ by
$k$-order context such that the line numbers serve as tie-breaker.
We set $M_k(v_1, \ldots, v_n) = (w_1, \ldots, w_n)$ where $w_i =
v_{\nu_{k,V}(i)}$. Now, we are ready to define the sort transform of
order $k$ of a word~$w$: Let $M_k([w]) = (w_1, \ldots, w_n)$; then
$\ST_k(w) = \last(w_1) \cdots \last(w_n)$, i.e., we first sort all
cyclic right-shifts of $w$ by their $k$-order contexts (by using a
stable sort method) and then we take the sequence of last symbols
according to this new sort order as the image under $\ST_k$. Since the
tie-breaker relies on right-shifts, we have $\ST_0(w) = \overline{w}$,
i.e., $\ST_0$ is the reversal mapping. The $k$-order sort transform of
$w$ is the pair $(\ST_k(w),i)$ where $i$ is the index of $w$ in
$M_k([w])$. As for the BWT, we see that the $k$-order sort transform
is not bijective.

Next, we show that it is possible to reconstruct $M_k([w])$ from
$\ST_k(w)$. Hence, it is possible to reconstruct $w$ from the pair
$(\ST_k(w),i)$ where $i$ is the index of~$w$ in~$M_k([w])$. The
presentation of the back transform is as follows. First, we will
introduce the \emph{$k$-order context graph} $G_k$ and we will show
that it is possible to rebuild~$M_k([w])$ from $G_k$. Then we will
show how to construct $G_k$ from $\ST_k(w)$. Again, the approach will
be slightly more general than required at the moment; but we will be
able to reuse it in the presentation of a bijective ST.

Let $V = ([u_1], \ldots, [u_s]) = (v_1, \ldots, v_n)$ be a list of
words built from conjugacy classes $[u_i]$ of non-empty words
$u_i$. Let $M = (w_1, \ldots, w_n)$ be an arbitrary permutation of the
elements in $V$.  We are now describing the edge-labeled directed
graph~\mbox{$G_k(M)$~--~the} $k$-order context graph of $M$~--~which will be
used later as a presentation tool for the inverses of the ST and the
bijective ST. The vertices of $G_k(M)$ consist of all $k$-order
contexts $\context_k(w)$ of words $w$ occurring in $M$. We draw an
edge $(c_1, i, c_2)$ from context $c_1$ to context $c_2$ labeled by
$i$ if $c_1 = \context_k(w_i)$ and $c_2 =
\context_k(\rightshift(w_i))$. Hence, every index $i \in \oneset{1,
  \ldots, n}$ of $M$ defines a unique edge in $G_k(M)$.  We can also
think of $\last(w_i)$ as an additional implicit label of the edge
$(c_1, i, c_2)$, since $c_2 = \context_k(\last(w_i) c_1)$.

A \emph{configuration} $(\mathcal{C},c)$ of the $k$-order context
graph $G_k(M)$ consists of a subset of the edges $\mathcal{C}$ and a
vertex $c$. The idea is that (starting at context $c$) we are walking
along the edges of $G_k(M)$ and whenever an edge is used, it is
removed from the set of edges $\mathcal{C}$. We now define the
transition
\begin{equation*}
  (\mathcal{C}_1,c_1)
  \stackrel{u}{\to} (\mathcal{C}_2,c_2)
\end{equation*}
from a configuration $(\mathcal{C}_1,c_1)$ to another configuration
$(\mathcal{C}_2,c_2)$ with output $u \in
\Sigma^*$ more formally. If there exists an edge in
$\mathcal{C}_1$ starting at $c_1$ and if $(c_1,i,c_2) \in
\mathcal{C}_1$ is the unique edge with the smallest label $i$ starting
at $c_1$, then we have the single-step transition
\begin{equation*}
  (\mathcal{C}_1,c_1)
  \stackrel{a}{\to} (\mathcal{C}_1 \setminus
  \smallset{(c_1,i,c_2)},c_2) \qquad \text{where } a = \last(w_i)
\end{equation*}
If there is no edge in $\mathcal{C}_1$ starting at $c_1$, then the
outcome of \,$(\mathcal{C}_1,c_1) \stackrel{}{\to}$\, is undefined.
Inductively, we define $(\mathcal{C}_1,c_1)
\stackrel{\varepsilon}{\to} (\mathcal{C}_1,c_1)$ and for $a \in \Sigma$ and
$u \in \Sigma^*$ we have
\begin{equation*}
  (\mathcal{C}_1,c_1)
  \stackrel{au}{\to} (\mathcal{C}_2,c_2)
  \quad \text{if} \quad
  (\mathcal{C}_1,c_1) \stackrel{u}{\to} (\mathcal{C}',c')
  \,\text{ and }\,
  (\mathcal{C}',c') \stackrel{a}{\to} (\mathcal{C}_2,c_2)
\end{equation*}
for some configuration $(\mathcal{C}',c')$. Hence, the reversal
$\overline{au}$ is the label along the path of length $\abs{au}$
starting at configuration $(\mathcal{C}_1,c_1)$. In particular, if
$(\mathcal{C}_1,c_1) \stackrel{u}{\to} (\mathcal{C}_2,c_2)$ holds,
then it is possible to chase at least $\abs{u}$ transitions starting
at $(\mathcal{C}_1,c_1)$; vice versa, if we are chasing $\ell$
transitions then we obtain a word of length $\ell$ as a label.  We
note that successively taking the edge with the smallest label comes
from the use of right-shifts.  If we had used left-shifts we would
have needed to chase largest edges for the following lemma to
hold. The reverse labeling of the big-step transitions is motivated by
the reconstruction procedure which will work from right to left.

\begin{lemma}\label{lem:smallestedgechasing}
  Let $k \in \N$, $V = ([v_1], \ldots, [v_s])$, $c_i =
  \context_k(v_i)$, and $G = G_k(M_k(V))$. Let $\mathcal{C}_1$
  consist of all edges of $G$. Then
  \begin{align*}
    (\mathcal{C}_1,c_1) &\stackrel{v_1}{\to} (\mathcal{C}_2,c_1) \\
    (\mathcal{C}_2,c_2) &\stackrel{v_2}{\to} (\mathcal{C}_3,c_2) \\[-2mm]
    &\ \ \vdots \\
    (\mathcal{C}_s,c_s) &\stackrel{v_s}{\to} (\mathcal{C}_{s+1},c_s) .
  \end{align*}
\end{lemma}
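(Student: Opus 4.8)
The plan is to prove the chain of transitions by induction on the number $s$ of conjugacy classes, closely mirroring the arguments of Lemma~\ref{lem:cycleretrieval} and Theorem~\ref{thm:bijbwt}: one chase peels off a single conjugacy class, and the edges that survive it form the $k$-order context graph of the shorter list. I write $M_k(V) = (w_1, \ldots, w_n)$ and recall two structural facts used throughout. First, the words sharing a fixed $k$-order context $c$ occupy a contiguous block of $M_k(V)$, inside which they are ordered by their line numbers in the flattened list $V$. Second, the edge labelled $i$ runs from $\context_k(w_i)$ to $\context_k(\rightshift(w_i))$ and carries the implicit symbol $\last(w_i)$.

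First I would establish the single chase $(\mathcal{C}_1,c_1) \stackrel{v_1}{\to} (\mathcal{C}_2,c_1)$. Since $[v_1]$ is the first class of $V$, every conjugate of $v_1$ has a smaller line number than every conjugate of $v_2, \ldots, v_s$, so inside each context block the conjugates of $[v_1]$ precede all other words. I would then show by induction on $t$ that after $t-1$ steps the walk sits at $\context_k(\rightshift^{t-1}(v_1))$ and that the smallest surviving edge there is the one belonging to $\rightshift^{t-1}(v_1)$. The engine is a ``first remaining occurrence'' invariant exactly as in Lemma~\ref{lem:cycleretrieval}: among the conjugates of $v_1$ sharing a given context the line number increases with the shift exponent, so once shifts $0, \ldots, t-2$ have been consumed the smallest remaining $[v_1]$-conjugate at that context is $\rightshift^{t-1}(v_1)$, and by the block ordering this is also the globally smallest remaining edge. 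Each step emits $\last(\rightshift^{t-1}(v_1))$, so after $\abs{v_1}$ steps the emitted label is $\last(\rightshift^{0}(v_1)) \cdots \last(\rightshift^{\abs{v_1}-1}(v_1)) = \overline{v_1}$, whence the output of the transition is $v_1$, and the walk returns to $\context_k(\rightshift^{\abs{v_1}}(v_1)) = c_1$. The edges consumed are precisely those of the $\abs{v_1}$ conjugates of $[v_1]$.

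For the inductive step I would put $V' = ([v_2], \ldots, [v_s])$ and argue that $\mathcal{C}_2$, the edge set left after the first chase, is the context graph $G_k(M_k(V'))$ up to a monotone relabelling of positions, exactly like the bijection $\alpha$ in the proof of Theorem~\ref{thm:bijbwt}. Deleting all conjugates of $[v_1]$ from $M_k(V)$ leaves the remaining words in the same relative order that $M_k(V')$ produces, because sorting by $k$-order context with the line number as tie-breaker is preserved under deletion; meanwhile contexts, last symbols, and the successor relation $w \mapsto \rightshift(w)$ are intrinsic to each word and hence unchanged. As $\alpha$ is monotone it preserves the selection of the smallest remaining edge at each context, so chases in $\mathcal{C}_2$ correspond step for step to chases in $G_k(M_k(V'))$. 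Applying the induction hypothesis to $V'$ then delivers the remaining transitions for $v_2, \ldots, v_s$.

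I expect the single-chase step to be the main obstacle, specifically the bookkeeping around ties in context. One must simultaneously control ties between conjugates of $[v_1]$ and of the later classes, which are broken by the block ordering of line numbers, and ties among distinct shifts of $v_1$ itself, which genuinely occur when $v_1$ is non-primitive and are broken by the monotonicity of the line number in the shift exponent. Making one invariant carry both of these at once, and checking that at every visited context a remaining $[v_1]$-edge is still available so that the path never stalls, is the delicate part; once this is in place, the reduction in the inductive step is routine, being the direct analogue of the relabelling already used for the bijective BWT.
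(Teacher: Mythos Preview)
Your plan is sound and would go through. The paper, however, organises the argument differently: it does not peel off $[v_1]$ and relabel, but instead fixes an arbitrary stage $i$ and shows directly, by induction on the step $j$ within that stage, that the chase from $(\mathcal{C}_i,c_i)$ traverses the edges whose labels are the positions in $M_k(V)$ of $\rightshift^{0}(v_i),\ldots,\rightshift^{\abs{v_i}-1}(v_i)$, in that order. The core observation is the same as in your single-chase step---any smaller surviving label at the current context would, by the line-number tie-breaker, correspond to a line in $V$ strictly earlier than the one expected, hence to an edge already removed---but the paper keeps all labels as positions in the one list $M_k(V)$ throughout, so the passage from stage $i$ to stage $i+1$ is just ``remove $\abs{v_i}$ more edges'' rather than a reduction to a new graph. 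Your route is more modular and is exactly the $k$-order analogue of the relabelling induction in Theorem~\ref{thm:bijbwt}; the paper's route is shorter because it avoids constructing and verifying the monotone bijection $\alpha$ after every stage.
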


\begin{proof}
  Let $M_k(V) = (w_1, \ldots, w_n)$. Consider some index $i$, $1 \leq
  i \leq s$, and let $(u_1, \ldots, u_t) = ([v_1], \ldots,
  [v_{i-1}])$.  Suppose that $\mathcal{C}_i$ consists of all edges of
  $G$ except for those with labels $\nu_{k,V}(j)$ for $1 \leq j \leq
  t$. Let $q = \abs{v_i}$. We write $v_i = a_1 \cdots a_q$ and
  $u_{t+j} = \rightshift^{j-1}(v_i)$, i.e., $[v_i] = (u_{t+1}, \ldots,
  u_{t+q})$.  Starting with
  $(\mathcal{C}_{i,1},c_{i,1})=(\mathcal{C}_i,c_i)$, we show that
  the sequence of transitions
  \begin{equation*}
    (\mathcal{C}_{i,1},c_{i,1}) \stackrel{a_q}{\to}
    (\mathcal{C}_{i,2},c_{i,2}) \stackrel{a_{q-1}}{\to}
    \cdots\,
    (\mathcal{C}_{i,q},c_{i,q}) \stackrel{a_1}{\to}
    (\mathcal{C}_{i,q+1},c_{i,q+1})
  \end{equation*}
  is defined. More precisely, we will see that the transition
  $(\mathcal{C}_{i,j},c_{i,j}) \stackrel{a_{q+1-j}}{\longrightarrow}
  (\mathcal{C}_{i,j+1},c_{i,j+1})$ walks along the edge
  $\bigl(c_{i,j}, \nu_{k,V}(t+j), c_{i,j+1}\bigr)$ and hence indeed is
  labeled with the letter $a_{q+1-j} = \last(u_{t+j}) =
  \last(w_{\nu_{k,V}(t+j)})$. Consider the context $c_{i,j}$.  By
  induction, we have $c_{i,j} = \context_k(u_{t+j})$ and no edge with
  label $\nu_{k,V}(\ell)$ for $1 \leq \ell < t+j$ occurs
  in~$\mathcal{C}_{i,j}$ while all other labels do occur. In
  particular, $(c_{i,j}, \nu_{k,V}(t+j), c_{i,j+1})$ for $c_{i,j+1} =
  \context_k(\rightshift(u_{t+j})) = \context_k(u_{t+j+1})$ is an edge
  in $\mathcal{C}_{i,j}$ (where $\context_k(\rightshift(u_{t+j})) =
  \context_k(u_{t+j+1})$ only holds for $j<q$; we will consider the
  case $j=q$ below). Suppose there were an edge $(c_{i,j}, z ,c') \in
  \mathcal{C}_{i,j}$ with $z < \nu_{k,V}(t+j)$.  Then $\context_k(w_z)
  = c_{i,j}$ and hence, $w_z$ has the same $k$-order context as
  $w_{\nu_{k,V}(t+j)}$. But in this case, in the construction of
  $M_k(V)$ we used the index in $V$ as a tie-breaker. It follows
  $\nu_{k,V}^{-1}(z) < t+1$ which contradicts the properties of
  $\mathcal{C}_{i,j}$. Hence, $(c_{i,j}, \nu_{k,V}(t+j), c_{i,j+1})$
  is the edge with the smallest label starting at context
  $c_{i,j}$. Therefore, $\mathcal{C}_{i,j+1} = \mathcal{C}_{i,j}
  \setminus \smallset{(c_{i,j}, \nu_{k,V}(t+j), c_{i,j+1})}$ and
  $(\mathcal{C}_{i,j},c_{i,j}) \stackrel{a_{q+1-j}}{\longrightarrow}
  (\mathcal{C}_{i,j+1},c_{i,j+1})$ indeed walks along the edge
  $(c_{i,j}, \nu_{k,V}(t+j), c_{i,j+1})$.

  It remains to verify that $c_{i,1} = c_{i,q+1}$, but this is clear
  since $c_{i,1} = \context_k(u_{t+1}) =
  \context_k(\rightshift^{q}(u_{t+1})) = c_{i,q+1}$.
  \qed
\end{proof}

\begin{lemma}\label{lem:reconstructingG}
  Let $k \in \N$, $V = ([v_1], \ldots, [v_s])$, $M = M_k(V) = (w_1,
  \ldots, w_n)$, and $L = \last(w_1) \cdots \last(w_n)$. Then it is
  possible to reconstruct $G_k(M)$ from $L$.
\end{lemma}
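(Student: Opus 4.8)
The plan is to reduce the reconstruction of $G_k(M)$ entirely to Lemma~\ref{lem:contextretrieval}, which already recovers the $k$-order context of every element of a $k$-sorted list from the last-symbol sequence alone. The first observation is that the hypotheses of that lemma are met here: by definition $M = M_k(V)$ is obtained by a stable sort of the conjugacy-class elements according to their $k$-order contexts, so $\context_k(w_1) \leq \cdots \leq \context_k(w_n)$, and the underlying multiset $\oneset{w_1, \ldots, w_n} = \bigcup_{i=1}^{s} [v_i]$ is built from conjugacy classes. Hence Lemma~\ref{lem:contextretrieval} applies verbatim with this $M$ and this $L$.

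First I would recover the vertices. Applying Lemma~\ref{lem:contextretrieval}, for every index $i \in \oneset{1, \ldots, n}$ we obtain
\begin{equation*}
  \context_k(w_i) = \lbl_L \pi_L(i) \cdot \lbl_L \pi_L^2(i) \,\cdots\, \lbl_L \pi_L^k(i),
\end{equation*}
which depends only on $L$ (through its standard permutation $\pi_L$). Since the vertex set of $G_k(M)$ is by definition the set of all contexts $\context_k(w)$ with $w$ occurring in $M$, i.e.\ exactly $\set{\context_k(w_i)}{1 \leq i \leq n}$, the vertices are determined by $L$.

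Next I would recover the edges. Each index $i$ contributes the single edge $(c_1, i, c_2)$ with $c_1 = \context_k(w_i)$ and $c_2 = \context_k(\rightshift(w_i))$. The source $c_1$ is already known from the displayed formula. For the target, note that $\last(w_i) = \lbl_L(i)$ is read off directly from the $i$-th letter of $L$, and that $\rightshift(w_i)$ starts with $\last(w_i)$ followed by the initial letters of $w_i$; consequently $c_2 = \context_k\bigl(\last(w_i)\, c_1\bigr)$, the implicit-label identity already noted before the statement. Thus $c_2$ is again a function of $L$ alone (and, since $\rightshift(w_i)$ is itself an element of the same conjugacy class, $c_2$ automatically lies among the vertices computed above). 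Assembling the vertex set together with the $n$ edges $(c_1, i, c_2)$ reconstructs $G_k(M)$ from $L$.

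The only point requiring care—and hence the main obstacle—is verifying that $M_k(V)$ genuinely satisfies the sortedness hypothesis of Lemma~\ref{lem:contextretrieval}; this is immediate from the definition of $\nu_{k,V}$, whose tie-breaking by line numbers does not affect the weak inequality $\context_k(w_1) \leq \cdots \leq \context_k(w_n)$ on the contexts themselves. Everything else is direct bookkeeping of which quantities are expressible through $\pi_L$ and $\lbl_L$, with no further combinatorial content.
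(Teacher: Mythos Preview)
Your argument is correct and follows essentially the same approach as the paper: invoke Lemma~\ref{lem:contextretrieval} to recover the contexts $\context_k(w_i)$ as the vertices, then build the edge $(c_i,i,\context_k(a_i c_i))$ for each $i$ using $a_i = \lbl_L(i)$. The paper's proof is simply a terser version of yours, omitting the explicit verification that the hypotheses of Lemma~\ref{lem:contextretrieval} are satisfied.
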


\begin{proof}
  By Lemma~\ref{lem:contextretrieval} it is possible to reconstruct
  the contexts $c_i = \context_k(w_i)$. This gives the vertices of the
  graph $G_k(M)$. Write $L = a_1 \cdots a_n$. For each $i \in
  \oneset{1, \ldots, n}$ we draw an edge $(c_i, i, \context_k(a_i
  c_i))$. This yields the edges of $G_k(M)$.
  \qed
\end{proof}

\begin{corollary}
  The $k$-order ST is invertible, i.e., given $(\ST_k(w),i)$ where $i$
  is the index of $w$ in $M_k([w])$ one can reconstruct the word $w$.
\end{corollary}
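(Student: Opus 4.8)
The plan is to recognize that $\ST_k$ is exactly the special case of the $k$-order context-graph machinery of Lemmas~\ref{lem:contextretrieval}, \ref{lem:smallestedgechasing}, and~\ref{lem:reconstructingG} in which the underlying list of conjugacy classes is a single class, namely $V = ([w])$. Write $L = \ST_k(w)$ and $M = M_k([w]) = (w_1, \ldots, w_n)$, so that $L = \last(w_1) \cdots \last(w_n)$ and, by hypothesis, $w = w_i$. Since $[w]$ is built from one conjugacy class, all three preparatory lemmas apply verbatim with $s = 1$ and representative $v_1 = w$. The inversion will proceed in three moves: first rebuild the context graph from $L$, then use the index $i$ to locate the correct starting vertex, and finally chase smallest-labelled edges to spell out $w$.

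First I would invoke Lemma~\ref{lem:reconstructingG} with $V = ([w])$ to reconstruct the $k$-order context graph $G = G_k(M)$ from $L$ alone; internally this already uses Lemma~\ref{lem:contextretrieval} to recover every context $c_j = \context_k(w_j)$, which I record for the next step. Because $w = w_i$, the context of the input is $\context_k(w) = \context_k(w_i) = c_i$, and this is the single place where the index enters the argument. Now I would apply Lemma~\ref{lem:smallestedgechasing} to $V = ([w])$: taking $\mathcal{C}_1$ to be the full edge set of $G$ and starting at the vertex $c_i = \context_k(v_1)$, the lemma guarantees the transition
\begin{equation*}
  (\mathcal{C}_1, c_i) \stackrel{w}{\to} (\mathcal{C}_2, c_i),
\end{equation*}
so that the word read off along this chase is precisely $v_1 = w$. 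This reconstructs $w$ from $(\ST_k(w), i)$, and since every step depends only on $L$ and $i$, the $k$-order ST is invertible.

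The one point that deserves care --- and the only genuine obstacle --- is the role of the index. Unlike the bijective BWT, where the canonicity of Lyndon words pins down the starting vertex of each chase for free, here the single conjugacy class $[w]$ has no distinguished representative recoverable from $L$ alone, so the index $i$ is needed exactly to tell us which reconstructed context $c_i$ is the context of the original $w$. I would stress that no primitivity assumption on $w$ is required: Lemma~\ref{lem:smallestedgechasing} recovers the representative $v_1 = w$ of a single class regardless of whether $w$ is a proper power, since its tie-breaking argument already copes with repeated contexts among the rotations. Everything else is routine specialization, the substantive work having been carried out in the three lemmas above.
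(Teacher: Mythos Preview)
Your proposal is correct and follows essentially the same argument as the paper: reconstruct $G_k(M_k([w]))$ from $L$ via Lemma~\ref{lem:reconstructingG}, use Lemma~\ref{lem:contextretrieval} together with the index $i$ to obtain the starting context $c_i = \context_k(w)$, and then chase smallest-labelled edges as in Lemma~\ref{lem:smallestedgechasing} with $s=1$ and $v_1 = w$ to recover $w$. The only cosmetic difference is that the paper records the final configuration as $(\emptyset,c)$, making explicit that all $n$ edges are consumed after $\abs{w}=n$ steps, whereas you leave the terminal edge set as an unspecified $\mathcal{C}_2$.
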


\begin{proof}
  The construction of $w$ consists of two phases. First, by
  Lemma~\ref{lem:reconstructingG} we can compute $G_k(M_k([w]))$.  By
  Lemma~\ref{lem:contextretrieval} we can compute $c = \context_k(w)$
  from $(\ST_k(w),i)$. In the second stage, we are using
  Lemma~\ref{lem:smallestedgechasing} for reconstructing $w$ by
  chasing 
  \begin{equation*}
    (\mathcal{C},c) \stackrel{w}{\to}
    (\emptyset,c)
  \end{equation*}
  where $\mathcal{C}$ consists of all edges in $G_k(M_k([w]))$.
  \qed
\end{proof}

Efficient implementations of the inverse transform rely on the fact
that the $k$-order contexts of $M_k([w])$ are ordered. This allows the
implementation of the $k$-order context graph $G_k$ in a vectorized
form \cite{abm08book,nz06dcc,nz07ieee,nzc08cpm}.

\begin{example}\label{exa:st}
  We compute the sort transform of order $2$ of $w = bcbccbcbcabbaaba$
  from Example~\ref{exa:start}. The list $M_2([w])$ is depicted in
  Figure~\ref{sfg:M2}.  This yields the transform $(\ST_2(w),i) =
  (bbacabaacccbbcbb,8)$ where $L = \ST_2(w)$ is the last column of the
  matrix~$M_2([w])$ and $w$ is the $i$-th element in $M_2([w])$. Next,
  we show how to reconstruct the input $w$ from $(L,i)$.  The standard
  permutation induced by $L$ is
  \begin{equation*}
    \pi_L = \left(\begin{array}{*{16}{p{5.5mm}}}
        1 & 2 & 3 & 4 & 5 & 6 & 7 & 8 & 9 & 10 & 11 & 12 & 13 & 14 & 15 & 16 \\
        3 & 5 & 7 & 8 & 1 & 2 & 6 & 12 & 13 & 15 & 16 & 4 & 9 & 10 & 11 & 14
      \end{array}\right).
  \end{equation*}
  Note that $\pi_L$ has four cycles $C_1 = (1,3,7,6,2,5)$, $C_2 =
  (4,8,12)$, $C_3 = (9,13)$, and $C_4 = (10,15,11,16,14)$.  We obtain
  the context of order $2$ of the $j$-th word by $c_j = \lbl_L\pi_L(j)
  \lbl_L\pi_L^2(j)$. In particular, $c_1 = aa$, $c_2 = c_3 = c_4 =
  ab$, $c_5 = c_6 = ba$, $c_7 = bb$, $c_8 = c_9 = c_{10} = c_{11} =
  bc$, $c_{12} = ca$, $c_{13} = c_{14} = c_{15} = cb$, and $c_{16} =
  cc$. With $L$ and these contexts we can construct the graph $G =
  G_2(M_2([w])$. The vertices of $G$ are the contexts and the
  edge-labels represent positions in $L$. The graph $G$ is depicted
  below:
  \begin{center} \footnotesize
    \begin{tikzpicture}[scale=0.9]
      \draw (0,0) node[circle,draw,inner sep=2.5pt] (ba) {$ba$};
      \draw (3,0) node[circle,draw,inner sep=2.5pt] (aa) {$aa$};
      \draw (6,0) node[circle,draw,inner sep=2.5pt] (ca) {$ca$};
      \draw (9,0) node[circle,draw,inner sep=2.5pt] (cb) {$cb$};
      \draw (0,3) node[circle,draw,inner sep=2.5pt] (bb) {$bb$};
      \draw (3,3) node[circle,draw,inner sep=2.5pt] (ab) {$ab$};
      \draw (6,3) node[circle,draw,inner sep=2.5pt] (bc) {$bc$};
      \draw (9,3) node[circle,draw,inner sep=2.5pt] (cc) {$cc$};

      \draw[->] (aa) -- node[above] {$1$} (ba);
      \draw[->] (ab) .. controls (1.25,1.75) .. node[sloped,above] {$2$} (ba);
      \draw[->] (ab) -- node[left] {$3$} (aa);
      \draw[->] (ab) -- node[above right] {$4$} (ca);
      \draw[->] (ba) .. controls (1.75,1.25) .. node[sloped,above] {$5$} (ab);
      \draw[->] (ba) -- node[left] {$6$} (bb);
      \draw[->] (bb) -- node[above] {$7$} (ab);
      \draw[->] (bc) -- node[above] {$8$} (ab);
      \draw[->] (bc) -- node[sloped,above] {$9$} (cb);
      \draw[->] (bc) .. controls (8,2) .. node[sloped,above] {$10$} (cb);
      \draw[->] (bc) .. controls (8.5,2.5) .. node[sloped,above] {$11$} (cb);
      \draw[->] (ca) -- node[left] {$12$} (bc);
      \draw[->] (cb) .. controls (7,1) .. node[sloped,above] {$13$} (bc);
      \draw[->] (cb) -- node[right] {$14$} (cc);
      \draw[->] (cb) .. controls (6.5,0.5) .. node[sloped,above] {$15$} (bc);
      \draw[->] (cc) -- node[above] {$16$} (bc);
    \end{tikzpicture}
  \end{center}
  We are starting at the context $c_i = c_8 = bc$ and then we are
  traversing $G$ along the smallest edge-label amongst the unused
  edges. The sequence of the edge labels obtained this way is
  \begin{equation*}
    (8,2,5,3,1,6,7,4,12,9,13,10,14,16,11,15).
  \end{equation*}
  The labeling of this sequence of positions yields $\overline{w} =
  abaabbacbcbccbcb$.  Since we are constructing the input from right
  to left, we obtain $w = bcbccbcbcabbaaba$.
\end{example}

\section{The bijective sort transform}
\label{sec:lst}

The bijective sort transform combines the Lyndon factorization with
the ST. This yields a new algorithm which serves as a similar
preprocessing step in data compression as the BWT. In a lot of
applications, it can be used as a substitute for the ST. The proof of
the bijectivity of the transform is slightly more technical than the
analogous result for the bijective BWT. The main reason is that the
bijective sort transform is less modular than the bijective BWT (which
can be grouped into a `Lyndon factorization part' and a
`Gessel-Reutenauer transform part' and which for example allows the
use of different orders on the alphabet for the different parts).

For the description of the bijective ST and of its inverse, we rely on
notions from Section~\ref{sec:sorttransform}. The bijective ST of a
word $w$ of length $n$ is defined as follows.  Let $w = v_s \cdots
v_1$ with $v_s \geq \cdots \geq v_1$ be the Lyndon factorization of
$w$. Let $M_k([v_1], \ldots, [v_s]) = (u_1, \ldots, u_n)$. Then the
bijective ST of order $k$ of $w$ is $\bijST_k(w) = \last(u_1) \cdots
\last(u_n)$.  That is, we are sorting the conjugacy classes of the
Lyndon factors by $k$-order contexts and then take the sequence of the
last letters. The letter \emph{L} in $\bijST_k$ is for \emph{Lyndon}.

\begin{theorem}\label{thm:bijst}
  The bijective ST of order $k$ is invertible, i.e., given
  $\bijST_k(w)$ one can reconstruct the word $w$.
\end{theorem}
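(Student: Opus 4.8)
The plan is to mirror the structure of the bijective BWT proof (Theorem~\ref{thm:bijbwt}), but replacing the standard permutation $\pi_L$ with the context-graph machinery developed for the ordinary ST in Section~\ref{sec:sorttransform}. Let $L=\bijST_k(w)$ and let $w=v_s\cdots v_1$ with $v_s\geq\cdots\geq v_1$ be the Lyndon factorization. By construction $M_k([v_1],\ldots,[v_s])=(u_1,\ldots,u_n)$ is built from the conjugacy classes of the Lyndon factors, so Lemma~\ref{lem:contextretrieval} applies (its hypothesis only needs a multiset built from conjugacy classes, not equal lengths): from $L$ alone I can recover every $k$-order context $c_i=\context_k(u_i)$, hence by Lemma~\ref{lem:reconstructingG} I can rebuild the full context graph $G=G_k(M_k([v_1],\ldots,[v_s]))$ purely from $L$. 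The point is that $G$ is recoverable without any index, so the whole difficulty is reduced to reading off the $v_i$ from $G$ by edge-chasing.

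The key step is to show that Lemma~\ref{lem:smallestedgechasing} can be driven \emph{autonomously}, i.e.\ without being told the starting contexts $c_1,\ldots,c_s$ in advance. In the bijective BWT the analogue was handled by decomposing $\pi_L$ into cycles and reading each Lyndon factor off a cycle, using that the smallest index in each cycle corresponds to a Lyndon word starting its own conjugacy class. Here I would argue the same way at the level of the configuration semigroup: starting from the configuration $(\mathcal{C}_1,c)$ where $\mathcal{C}_1$ is all edges of $G$ and $c=\context_k(v_1)$, Lemma~\ref{lem:smallestedgechasing} guarantees a chase $(\mathcal{C}_1,c_1)\stackrel{v_1}{\to}(\mathcal{C}_2,c_1)$ that returns to the same context while consuming exactly the $|v_1|$ edges coming from the block $[v_1]$, and leaves $\mathcal{C}_2$ equal to the edge set of $G_k(M_k([v_2],\ldots,[v_s]))$. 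Iterating, one peels off $v_1$, then $v_2$, and so on up to $v_s$, so that chasing all edges in smallest-label order, reading the output labels, produces $\overline{v_1},\overline{v_2},\ldots$ and hence $w=v_s\cdots v_1$ after reversal.

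The genuine obstacle — and the reason the theorem is flagged as more technical than the bijective BWT — is the bookkeeping that ties the \emph{global} smallest-edge chase on $G$ to the \emph{per-block} chases of Lemma~\ref{lem:smallestedgechasing}. In Lemma~\ref{lem:smallestedgechasing} each block $[v_i]$ is processed starting from an externally supplied residual edge set $\mathcal{C}_i$ consisting of all edges except those with labels $\nu_{k,V}(j)$ for the earlier classes; I must verify that this residual set is exactly what the preceding chases leave behind, and that when the chase of $[v_i]$ finishes and returns to $c_i$, the globally smallest remaining edge out of $c_i$ is the first edge of the next block whose context happens to coincide, so the procedure never stalls prematurely at a shared context. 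The nonobvious input is that $v_1\leq^\omega\cdots\leq^\omega v_s$, which — together with the tie-breaking by line number built into $\nu_{k,V}$ — forces the smallest-label edge leaving any context to belong to the earliest not-yet-consumed block sharing that context; this is precisely the fact that the monotone ordering of the Lyndon factors is compatible with the $k$-order context sort, and it plays the role that primitivity of $v_1$ played in Lemma~\ref{lem:cycleretrieval}. Once this compatibility is pinned down, the reconstruction of $w$ from $L$ follows, and since $\bijST_k$ is a well-defined map between words of length $n$ admitting a left inverse, it is a bijection.
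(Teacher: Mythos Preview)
Your approach is essentially the paper's: rebuild $G=G_k(M_k([v_1],\ldots,[v_s]))$ from $L$ via Lemmas~\ref{lem:contextretrieval} and~\ref{lem:reconstructingG}, then drive the edge-chase of Lemma~\ref{lem:smallestedgechasing} autonomously using the Lyndon ordering of the factors. You correctly locate the difficulty and the key input (the monotonicity $v_1\leq\cdots\leq v_s$ combined with the line-number tie-break in $\nu_{k,V}$). Two points remain underspecified, and the first is a genuine gap. You only treat the transition between consecutive factors in the case where the next factor \emph{shares} the current context $c_i$; you do not say what to do when the chase returns to $c_i$ and there is \emph{no} outgoing edge left at $c_i$. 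The paper handles this as a separate case: one takes the edge $(c'',\ell,c''')\in\mathcal{C}'$ with globally minimal label $\ell$, and argues (via the invariant that $\mathcal{C}_{j+1}$ contains no edge with source $<c_{j+1}$, plus minimality of $v_{j+2}$ among the remaining conjugates) that $c''=c_{j+2}$. Without this rule the procedure simply halts before $\mathcal{C}$ is empty, so your sketch as written does not terminate correctly.

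Second, your framing that the chase ``peels off $v_1$, then $v_2$,\ldots'' and ``produces $\overline{v_1},\overline{v_2},\ldots$'' overstates what is recovered. As the paper remarks, one cannot distinguish the boundary between $v_j$ and $v_{j+1}$ when $c_{j+1}=c_j$, since we do not know $\abs{v_j}$; the algorithm outputs $\overline{w}$ as one stream, not the individual reversed factors. This does not harm invertibility, but it means the clean induction you suggest (remove $[v_1]$, identify the residual with $G_k(M_k([v_2],\ldots,[v_s]))$ via a monotone relabeling, recurse) cannot be \emph{executed} algorithmically---though it is a legitimate proof device, exactly parallel to the bijection $\alpha$ in the proof of Theorem~\ref{thm:bijbwt}, provided you supply the missing ``restart'' rule above.
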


\begin{proof}
  Let $w = v_s \cdots v_1$ with $v_s \geq \cdots \geq v_1$ be the
  Lyndon factorization of $w$, let $c_i = \context_k(v_i)$, and let $L
  = \bijST_k(w)$. By Lemma~\ref{lem:reconstructingG} we can rebuild
  the $k$-order context graph $G = G_k(M_k([v_1], \ldots, [v_s])) =
  (w_1, \ldots, w_n)$ from $L$. Let $\mathcal{C}_1$ consist of all
  edges in $G$. Then by Lemma~\ref{lem:smallestedgechasing} we see
  that
  \begin{align*}
    (\mathcal{C}_1,c_1) &\stackrel{v_1}{\to} (\mathcal{C}_2,c_1) \\[-2mm]
    &\ \ \vdots \\
    (\mathcal{C}_s,c_s) &\stackrel{v_s}{\to} (\mathcal{C}_{s+1},c_s).
  \end{align*}
  We cannot use this directly for the reconstruction of $w$ since we
  do not know the Lyndon factors $v_i$ and the contexts $c_i$.

  The word $v_1$ is the first element in the list $M_k([v_1], \ldots,
  [v_s])$ because $v_1$ is lexicographically minimal and it appears as
  the first element in the list $([v_1], \ldots, [v_s])$. Therefore,
  by Lemma~\ref{lem:contextretrieval} we obtain $c_1 = \context_k(v_1)
  = \lbl_L\pi_L(1) \cdots \lbl_L\pi_L^k(1)$.

  The reconstruction procedure works from right to left.  Suppose we
  have already reconstructed $w' v_j \cdots v_1$ for $j \geq 0$ with
  $w'$ being a (possibly empty) suffix of $v_{j+1}$.  Moreover,
  suppose we have used the correct contexts $c_1, \ldots, c_{j+1}$.
  Consider the configuration $(\mathcal{C}',c')$ defined by
  \begin{align*}
    (\mathcal{C}_1,c_1) &\stackrel{v_1}{\to} (\mathcal{C}_2,c_1) \\[-2mm]
    &\ \ \vdots \\
    (\mathcal{C}_j,c_j) &\stackrel{v_j}{\to} (\mathcal{C}_{j+1},c_j) \\
    (\mathcal{C}_{j+1},c_{j+1}) &\stackrel{w'}{\to} (\mathcal{C}',c')
  \end{align*}
  We assume that the following invariant holds: $\mathcal{C}_{j+1}$
  contains no edges $(c'',\ell,c''')$ with $c'' < c_{j+1}$.
  We want to rebuild the next letter. We have to consider three cases.
  First, if $\abs{w'} < \abs{v_{j+1}}$ then
  \begin{equation*}
    (\mathcal{C}',c') \stackrel{a}{\to} (\mathcal{C}'',c'')
  \end{equation*}
  yields the next letter $a$ such that $a w'$ is a suffix of
  $v_{j+1}$. Second, let $\abs{w'} = \abs{v_{j+1}}$ and suppose that
  there exists an edge $(c_{j+1},\ell,c''') \in \mathcal{C}'$ starting
  at $c' = c_{j+1}$. Then there exists a word $v'$ in $[v_{j+2}],
  \ldots, [v_s]$ such that $\context_k(v') = c_{j+1}$. If
  $\context_k(v_{j+2}) \neq c_{j+1}$ then from the invariant it
  follows that $\context_k(v_{j+2}) > c_{j+1} = \context_k(v')$.
  This is a contradiction, since $v_{j+2}$ is minimal among the words
  in $[v_{j+2}], \ldots, [v_s]$. Hence, $\context_k(v_{j+2}) = c_{j+2} =
  c_{j+1}$ and the invariant still holds for $\mathcal{C}_{j+2} =
  \mathcal{C}'$. The last letter $a$ of $v_{j+2}$ is obtained by
  \begin{equation*}
    (\mathcal{C}',c') 
    = (\mathcal{C}_{j+2},c_{j+2}) \,\stackrel{a}{\to}\, (\mathcal{C}'',c'').
  \end{equation*}
  The third case is $\abs{w'} = \abs{v_{j+1}}$ and there is no edge
  $(c_{j+1},\ell,c''') \in \mathcal{C}'$ starting at $c' = c_{j+1}$.
  As before, $v_{j+2}$ is minimal among the (remaining) words in
  $[v_{j+2}], \ldots, [v_s]$. By construction of $G$, the unique edge
  $(c'',\ell,c''') \in \mathcal{C}'$ with the minimal label $\ell$ has
  the property that $w_{\ell} = v_{j+2}$. In particular, $c'' =
  c_{j+2}$. Since $v_{j+2}$ is minimal, the invariant for
  $\mathcal{C}_{j+2} = \mathcal{C}'$ is established.
  In this case, the last letter $a$ of $v_{j+2}$ is obtained by
  \begin{equation*}
    (\mathcal{C}_{j+2},c_{j+2}) \stackrel{a}{\to} (\mathcal{C}'',c''').
  \end{equation*}
  We note that we cannot distinguish between the first and the second
  case since we do not know the length of $v_{j+1}$, but in both
  cases, the computation of the next symbol is identical.  In
  particular, in contrast to the bijective BWT we do not implicitly
  recover the Lyndon factorization of $w$.
  \qed
\end{proof}

We note that the proof of Theorem~\ref{thm:bijst} heavily relies on
two design criteria. The first one is to consider $M_k([v_1], \ldots,
[v_s])$ rather than $M_k([v_s], \ldots, [v_1])$, and the second is to
use right-shifts rather than left-shifts.
The proof of Theorem~\ref{thm:bijst} yields the following algorithm
for reconstructing $w$ from $L = \bijST_k(w)$:
\begin{enumerate}[(1)]
\item Compute the $k$-order context graph $G = G_k$ and the $k$-order
  context $c_1$ of the last Lyndon factor of $w$.
\item Start with the configuration $(\mathcal{C},c)$ where
  $\mathcal{C}$ contains all edges of $G$ and $c \coloneq c_1$.
\item If there exists an outgoing edge starting at $c$ in the set
  $\mathcal{C}$, then
  \begin{itemize}
  \item Let $(c,\ell,c')$ be the edge with the minimal label $\ell$
    starting at $c$.
  \item Output $\lbl_L(\ell)$. 
  \item Set $\mathcal{C} \coloneq \mathcal{C} \setminus
    \smallset{(c,\ell,c')}$ and $c \coloneq c'$. 
  \item Continue with step (3).
  \end{itemize}
\item If there is no outgoing edge starting at $c$ in the set
  $\mathcal{C}$, but $\mathcal{C} \neq \emptyset$, then
  \begin{itemize}
  \item Let $(c',\ell,c'') \in \mathcal{C}$ be the edge with
    the minimal label $\ell$.
  \item Output $\lbl_L(\ell)$.
  \item Set $\mathcal{C} \coloneq \mathcal{C} \setminus
    \smallset{(c',\ell,c'')}$ and $c \coloneq c''$. 
  \item Continue with step (3).
  \end{itemize}
\item The algorithm terminates as soon as $\mathcal{C} = \emptyset$.
\end{enumerate}
The sequence of the outputs is the reversal $\overline{w}$ of the word
$w$.

\begin{example}
  We consider the word $w = bcbccbcbcabbaaba$ from
  Example~\ref{exa:start} and its Lyndon factorization $w = v_6 \cdots
  v_1$ where $v_6 = bcbcc$, $v_5 = bc$, $v_4 = bc$, $v_3 = abb$, $v_2
  = aab$, and $v_1 = a$.  For this particular word $w$ the bijective
  Burrows-Wheeler transform and the bijective sort transform of order
  $2$ coincide.  From Example~\ref{exa:bijbwt}, we know $L =
  \bijST_2(w) = \bijBWT(w) = abababaccccbbcbb$ and the standard
  permutation $\pi_L$.  As in Example~\ref{exa:st} we can reconstruct
  the $2$-order contexts $c_1, \ldots, c_{16}$ of
  $M_2([v_1],\ldots,[v_6])$: $c_1 = c_2 = aa$, $c_3 = c_4 = ab$, $c_5
  = c_6 = ba$, $c_7 = bb$, $c_8 = c_9 = c_{10} = c_{11} = bc$, $c_{12}
  = c_{13} = c_{14} = c_{15} = cb$, and $c_{16} = cc$. With $L$ and
  the $2$-order contexts we can construct the graph $G =
  G_k(M_2([v_1],\ldots,[v_6]))$:
  \begin{center} \footnotesize
    \begin{tikzpicture}[scale=0.9]
      \draw (0,0) node[circle,draw,inner sep=2.5pt] (ba) {$ba$};
      \draw (3,0) node[circle,draw,inner sep=2.5pt] (aa) {$aa$};
      \draw (9,0) node[circle,draw,inner sep=2.5pt] (cb) {$cb$};
      \draw (0,3) node[circle,draw,inner sep=2.5pt] (bb) {$bb$};
      \draw (3,3) node[circle,draw,inner sep=2.5pt] (ab) {$ab$};
      \draw (6,3) node[circle,draw,inner sep=2.5pt] (bc) {$bc$};
      \draw (9,3) node[circle,draw,inner sep=2.5pt] (cc) {$cc$};

      \draw[->] (aa) .. controls (4,-0.5) and (4,+0.5) .. node[right] {$1$} (aa);
      \draw[->] (aa) -- node[above] {$2$} (ba);
      \draw[->] (ab) -- node[left] {$3$} (aa);
      \draw[->] (ab) .. controls (1.25,1.75) .. node[sloped,above] {$4$} (ba);
      \draw[->] (ba) .. controls (1.75,1.25) .. node[sloped,above] {$5$} (ab);
      \draw[->] (ba) -- node[left] {$6$} (bb);
      \draw[->] (bb) -- node[above] {$7$} (ab);
      \draw[->] (bc) .. controls (7,1) .. node[sloped,above] {$8$} (cb);
      \draw[->] (bc) -- node[sloped,above] {$9$} (cb);
      \draw[->] (bc) .. controls (8,2) .. node[sloped,above] {$10$} (cb);
      \draw[->] (bc) .. controls (8.5,2.5) .. node[sloped,above] {$11$} (cb);
      \draw[->] (cb) .. controls (6.5,0.5) .. node[sloped,above] {$12$} (bc);
      \draw[->] (cb) .. controls (6.0,0.0) .. node[sloped,above] {$13$} (bc);
      \draw[->] (cb) -- node[right] {$14$} (cc);
      \draw[->] (cb) .. controls (5.5,-0.5) .. node[sloped,above] {$15$} (bc);
      \draw[->] (cc) -- node[above] {$16$} (bc);
    \end{tikzpicture}
  \end{center}
  We are starting with the edge with label $1$ and then we are
  traversing $G$ along the smallest unused edges. If we end in a
  context with no outgoing unused edges, then we are continuing with
  the smallest unused edge. This gives the sequence $(1,2,5,3)$ after
  which we end in context $aa$ with no unused edges available. Then we
  continue with the sequences $(4,6,7)$ and
  $(8,12,9,13,10,14,16,11,15)$. The complete sequence of edge labels
  obtained this way is
  \begin{equation*}
    (1,2,5,3,\ 4,6,7,\ 8,12,9,13,10,14,16,11,15)
  \end{equation*}
  and the labeling of this sequence with $\lbl_L$ yields $\overline{w}
  = abaabbacbcbccbcb$. As for the ST, we are reconstructing the input
  from right to left, and hence we get $w = bcbccbcbcabbaaba$.
\end{example}

\section{Summary}

We discussed two bijective variants of the Burrows-Wheeler transform
(BWT).  The first one is due to Scott. Roughly speaking, it is a
combination of the Lyndon factorization and the
Gessel-Reuternauer transform.  The second variant is derived from the
sort transform (ST); it is the main contribution of this paper.
We gave full constructive proofs for the bijectivity of both
transforms.  As a by-product, we provided algorithms for the inverse
of the BWT and the inverse of the ST.  For the latter, we introduced
an auxiliary graph structure---the $k$-order context graph. This
graph yields an intermediate step in the computation of the inverse of
the ST and the bijective ST. It can be seen as a generalization of the
cycle decomposition of the standard permutation---which in turn can
be used as an intermediate step in the computation of the inverse of
the BWT and the bijective BWT.

\bigskip

\noindent
\textbf{Acknowledgments.} \ 
The author would like to thank Yossi Gil and David A.\ Scott for many
helpful discussions on this topic as well as Alexander Lauser,
 Antonio Restivo,
 and the
anonymous referees for their numerous suggestions which improved the
presentation of this paper.



\end{document}